\theoremstyle{plain}
\newtheorem{theorem}{Theorem}[section]
\newtheorem{corollary}[theorem]{Corollary}
\newtheorem{definition}[theorem]{Definition}
\theoremstyle{remark}
\newcommand{\Appendix}[1]{the full version for}
\newcommand{\x}{\bm{x}}
\newcommand{\s}{\bm{s}}
\newcommand{\K}{\mathcal{K}}
\renewcommand{\P}{\mathcal{P}}
\newcommand{\cV}{\mathcal{V}}
\newcommand{\bbE}{\mathbb{E}}
\newcommand{\bbR}{\mathbb{R}}
\icmltitlerunning{A Resilient and Accessible Distribution-Preserving Watermark for Large Language Models}
\begin{document}

\twocolumn[
\icmltitle{A Resilient and Accessible Distribution-Preserving Watermark\\ for Large Language Models}




\begin{icmlauthorlist}
\icmlauthor{Yihan Wu}{yyy}
\icmlauthor{Zhengmian Hu}{yyy}
\icmlauthor{Junfeng Guo}{yyy}
\icmlauthor{Hongyang Zhang}{sch}
\icmlauthor{Heng Huang}{yyy}
\end{icmlauthorlist}

\icmlaffiliation{yyy}{Department of Computer Science, University of Maryland College Park}
\icmlaffiliation{sch}{School of Computer Science, University of Waterloo}

\icmlcorrespondingauthor{Yihan Wu}{ywu42@umd.edu}
\icmlcorrespondingauthor{Heng Huang}{heng@umd.edu}

\icmlkeywords{Machine Learning, ICML}

\vskip 0.3in
]



\printAffiliationsAndNotice{}  

\begin{abstract}
Watermarking techniques offer a promising way to identify machine-generated content via embedding covert information into the contents generated from language models. A challenge in the domain lies in preserving the distribution of original generated content after watermarking. Our research extends and improves upon existing watermarking framework, placing emphasis on the importance of a \textbf{Di}stribution-\textbf{P}reserving (DiP) watermark. Contrary to the current strategies, our proposed DiPmark simultaneously preserves the original token distribution during watermarking (distribution-preserving), is detectable without access to the language model API and prompts (accessible), and is provably robust to moderate changes of tokens (resilient). DiPmark operates by selecting a random set of tokens prior to the generation of a word, then modifying the token distribution through a distribution-preserving reweight function to enhance the probability of these selected tokens during the sampling process. Extensive empirical evaluation on various language models and tasks demonstrates our approach's distribution-preserving property, accessibility, and resilience, making it a effective solution for watermarking tasks that demand impeccable quality preservation. 
Code is available at\footnote{https://github.com/yihwu/DiPmark.git}.
\end{abstract}
\section{Introduction}
In the current era, artificial intelligence has attained the capability to generate text remarkably indistinguishable from human authorship \cite{palm2,openai2023gpt}. This advancement has raised concerns regarding the discernment of authenticity in content, questioning whether it originates from human intellect or AI models. In particular, the proficiency of large language models (LLMs) in imitating human writing style brings a series of implications. While these models facilitate the simplification of complex tasks and enhance human capabilities, they simultaneously harbor risks of misuse, evident in instances of academic dishonesty and the spread of misinformation via online platforms.

The challenge of distinguishing machine-generated content from that authored by humans is escalating, with conventional detection tools often proving inadequate~\cite{krishna2023paraphrasing}. 
To address this issue, \textit{watermarking} emerges as a nuanced solution~\cite{kirchenbauer2023watermark}. This type of approach involves embedding discreet yet identifiable watermarks in AI-generated text, signifying its artificial origin. Beyond the widely held notion that watermarks should be identifiable via a secret key~\cite{kirchenbauer2023watermark}, there are additional fundamental characteristics necessary for an efficient watermark within language models:
\vspace{-0.2cm}
\begin{itemize} 
    \item (Distribution-preserving) The watermark should provably preserving the distribution of the original language model.
    \item (Accessible) Detecting watermark within the content should be efficient and straightforward without accessing the language models and prompts. 
    \item (Resilient) The watermark should remain identifiable if the content undergoes moderate modifications. Furthermore, we define a watermark as `provably resilient' if it can be provably identified under such modifications.

\end{itemize}
\vspace{-0.2cm}


To the best of our knowledge, there is no watermark technique adhere to the aforementioned three key properties simultaneously (see Table~\ref{tab:watermark comparison} for an overall comparison).
Existing methods either impact the model's sampling distribution \citep{kirchenbauer2023watermark,zhao2023provable}, lack resilience against text alterations such as editing or cropping \citep{christ2023undetectable}, require thousands of inference step during the detection process \citep{kuditipudi2023robust}, or require the prompt and the token logits of language model API during detection \citep{hu2023unbiased}.

\begin{table*}[]
\centering
\vspace{-0.2cm}
\caption{Existing watermarking techniques do not adhere to all three key properties (distribution-preserving, accessible, resilient). \textbf{Distribution-preserving:} \citet{kirchenbauer2023watermark} impacts the distribution of the generated tokens. \textbf{Accessible:} During detection, \citet{kuditipudi2023robust} necessitates thousands of inference steps, and \citet{hu2023unbiased} requires the token logits of language model API and the prompt, which could result in huge computational costs and hurt the accessibility. \textbf{Resilient and Provably Resilient:} DiPmark is provably resilient against arbitrary text modifications with a guaranteed false positive rate, whereas other methods lack corresponding discussions.}
\label{tab:watermark comparison}
\resizebox{1\textwidth}{!}{%
\begin{tabular}{l|cccc}
\toprule
Properties
& \citet{kirchenbauer2023watermark}  & \citet{kuditipudi2023robust} &\citet{hu2023unbiased}& DiPmark \\ \midrule
Distribution-preserving (Sec.~\ref{sec:DiPmark}\&~\ref{sec:dip perf exp})                     & \XSolidBrush  & \Checkmark    & \Checkmark             &\Checkmark            \\ 
Accessible (Sec.~\ref{sec:detection}\&~\ref{sec:Efficiency})    &\Checkmark                  & \XSolidBrush & \XSolidBrush          &\Checkmark            \\ 
Resilient and Provably Resilient (Sec.~\ref{sec:provable robust}\&~\ref{sec:robust})               & \XSolidBrush     & \XSolidBrush       & \XSolidBrush           &\Checkmark            \\
 \bottomrule
\end{tabular}
}
\vspace{-0.4cm}
\end{table*}
Our watermarking framework (\textit{i.e.,} DiPmark), in alignment with pre-existing schema~\cite{kirchenbauer2023watermark}, is comprised of two components: (1.) a generating function 
, which transforms a prompt and a secret watermark key into the content from the language model; and (2.) a detecting function that identifies a potential watermarked text through the secret key. During the text generation process, language model providers will adjust the output probability of the generated tokens using a secret key. We design a novel distribution-preserving generating function, ensuring that each instance of text generation consists with the original language model's distribution. As for the detection phase, the user can detect the presence of watermark efficiently by solely using the secret key and the watermarked text without accessing prompts and language model API. Through experimental assessments on widely-studied language models, including BART-large model~\citep{liu2020multilingual}, LLaMA-2~\citep{touvron2023llama2}, and GPT-4~\cite{openai2023gpt}; our approach is demonstrated possessing above mentioned three fundamental properties. 



\textbf{Our contributions.}
Our work tackles the problem of designing watermarks for large language models without affecting its overall performance and advances the state-of-the-art in multiple ways.

\begin{itemize}
\vspace{-0.2cm}
\item We propose a novel watermarking framework, DiPmark, that introduces a \textbf{provably} distribution-preserving watermarking scheme for language models. Comparing with existing methods, DiPmark is \textbf{simultaneously} distribution-preserving, efficient, and provable resilient.

\item We identify the existing watermark detector \cite{kirchenbauer2023watermark} cannot precisely guarantee the false positive rate of detection. To solve this problem, we develop an well-defined watermark detection statistic for DiPmark, which can reliably detect the watermark within generated contents while maintaining a guaranteed false positive rate. 
Furthermore, we also show our detect algorithm is \textbf{provably} robust against arbitrary text modifications.

\item Through extensive experiments on widely-adopted language models 
, we validate the distribution-preserving property of DiPmark.
Notably, the detection time for 1,000 watermarked sequences produced by LLaMA-2 stands at a mere 90 seconds without the need of API access and prompts (at least 4X faster compared with current distribution-preserving watermark detection~\citep{hu2023unbiased,kuditipudi2023robust}). Furthermore, DiPmark exhibits robustness even when subjected to 20\% to 30\% random text modifications and paraphrasing attacks. Finally, in a case study, we show the effectiveness of DiPmark on GPT-4.

\vspace{-0.2cm}

\end{itemize}

\section{Related Work}
In a recent seminal work, \citet{kirchenbauer2023watermark} introduced a pioneering watermarking scheme tailored for LLMs. 
However, this approach inevitably leads to a pivotal change in the distribution of the generated text, potentially compromising the quality of the generated content. To maintain the output distribution in watermarked content, alternative strategies have been explored. \citet{christ2023undetectable} and \citet{kuditipudi2023robust} employed the inverse sampling method to generate watermarked token distributions. Notably, \citet{christ2023undetectable}'s method faces resilience issues under modifications or changes and lacks empirical validation for detectability. Meanwhile, \citet{kuditipudi2023robust}'s approach requires the secret key distribution during detection, potentially compromising data security and watermark stealthiness. Moreover, their detection process involves thousands of resampling steps from the secret key distribution, which is inefficient for lengthy texts. \citet{hu2023unbiased} also used inverse sampling and permutation based reweight for watermarking, but the detector requires the token logits of language model API and the prompt for generating the content, undermining its operational efficiency. A detailed discussion of watermarking LLMs is in Appendix~\ref{app:related work}. 

Our research aligns closely with \citet{kirchenbauer2023watermark}. In their settings, they employed watermarking for text derived from a language model by separating the token set into `red' and `green' lists. Building on this foundation, we introduce an evolved family of reweight strategies. This approach ensures equivalency in distribution between the watermarked language model and the original language model. 
\section{Preliminary} 

\textbf{Notations.} 
We first introduce a few essential notations. Let us represent the vocabulary (or token) set by $V$ and its size or volume by $N=|V|$. We further introduce the set $\cV$, defined as an aggregation of all string sequences, even accounting for those of zero length.
In the context of a language model, it produces a token sequence based on a given prompt. For a single step of this process, the likelihood of generating the next token 
$x_{n+1}\in V$ conditioned on the current context $x_1,...,x_n$ is represented as 
$P_M(x_{n+1}\mid x_1, x_2, ..., x_n)$. 
For the sake of brevity and clarity, we opt for the condensed notation: 
$P_{M}(\bm{x}_{n+1:n+m}\mid\bm{x}_{1:n})$, where $\bm{x}_{n+1:n+m}=(x_{n+1},\dots,x_{n+m})$. Note that the prompt is deliberately omitted in this representation.

In the context of watermarking, the server provider will use a set of \textit{i.i.d.} \textit{watermark cipher} $\{\theta_i\in\Theta,i\in\mathbb{N}\}$ on the cipher space $\Theta$ to generate the text. The cipher $\theta_i$ is usually generated by a secret key $k\in\K$ and a \textit{fragment} of the previous context, named \textit{texture key}, $\s_i$.  Instances of texture keys include $x_{t-1}$, $\x_{t-3:t-1}$, $\x_{1:t-1}$, etc. Each $\theta_i$ is independent and following the same distribution $P_{\Theta}$. We now provide the formal definition of the reweight strategy.

\begin{definition}[Reweight strategy] Denote by $\P$ the set of all distributions on the token set $V$. A reweight strategy is a mapping $P_W:\P\times\Theta\to\P$. Given the original distribution $P_M(x_{n+1}\mid \x_{1:n})\in\P$, the watermarked distribution with cipher $\theta_i$ is given by $P_W(P_M(x_{n+1}\mid \x_{1:n}),\theta_i)$. For brevity, we represent it as $P_W(x_{n+1} | \x_{1:n}, \theta_i)$.
\end{definition}

The reweight strategy stands as the foundation of the watermark algorithm by shaping the distribution of watermarked text. As introduced in \cite{kirchenbauer2023watermark}, the authors propose a red-green list reweight technique, where the vocabulary set is separated into the red and green lists and the probability of green tokens is promoted during the sampling process. Specifically, given an initial token probability $p(t)$, the watermarked probability for the token, denoted by $p_W(t)$, is formulated as:
\begin{equation*}p_W(t)=\left\{
    \begin{aligned}
        &\frac{p(t)}{\sum_{t\in\textrm{red}}p(t)+\sum_{t\in\textrm{green}}e^{\delta}p(t)},\quad t\in \textrm{red list};\\
        &\frac{e^{\delta}p(t)}{\sum_{t\in\textrm{red}}p(t)+\sum_{t\in\textrm{green}}e^{\delta}p(t)},\quad t\in \textrm{green list},
    \end{aligned}\right.
\end{equation*}
where $\delta > 0$ is a predetermined constant. This strategy reveals an inherent bias in the watermarked distribution. For example, consider $\gamma = 0.5$, suggesting that half of $V$ comprises the red list. With $V = \{a,b\}$, and given probabilities $p(a) = 0.99$ and $p(b) = 0.01$, there are two equivalent permutations of $V$ with congruent appearance likelihoods. An analysis for any value of $\delta > 0$ yields $p_W(a) = 0.5(\frac{e^{\delta}p(a)}{e^{\delta}p(a) + p(b)} + \frac{p(a)}{e^{\delta}p(b) + p(a)}) < p(a)$. This indicates that the red-green list watermark does not preserve the original text's probability. Below we introduce the formal definition of distribution-preserving reweight strategy and distribution-preserving watermark.


\begin{definition}[Distribution-preserving reweight strategy]
    A reweight strategy, denoted $P_W$, is said to be distribution-preserving at an individual generation step if, for all $\x_{1:n} \in \cV$ and any $i \leq n$, it holds that $P_M(x_i|\x_{1:i-1}) = \bbE_{\theta_i\sim P_\Theta}[P_W(x_i|\x_{1:i-1},\theta_i)].$
    \end{definition}
\begin{definition}[Distribution-preserving watermark]
If a watermark framework preserves the text distribution throughout all generation steps, i.e., $\forall n>0$, for all sequences $\x_{1:n} \in \cV$  we have $P_M(\x_{1:n}) = \bbE_{\theta_1,...,\theta_n}[P_W(\x_{1:n}|\theta_1,...,\theta_n)],$ then the watermark is distribution-preserving. 
\end{definition}

A distribution-preserving reweight strategy can naturally lead to a distribution-preserving watermark, as illustrated by:
\vspace{-0.2cm}
\begin{equation*}
    \begin{split}
        &\bbE_{\theta_{1:n}}[P_W(\x_{1:n}|\theta_{1:n})] = \bbE_{\theta_{1:n}}\left[\prod_{i=1}^nP_W(x_i|\x_{1:i-1},\theta_i)\right] \\ 
        &= \prod_{i=1}^n\bbE_{\theta_i}[P_W(x_i|\x_{1:i-1},\theta_i)] = P_M(\x_{1:n}).
    \end{split}
\end{equation*}

The above equality stems from the independence property of the set $\{\theta_i\}$. Therefore, to establish a distribution-preserving watermark, it is essential to incorporate both: a) a distribution-preserving reweight strategy and b) an \textit{i.i.d.} set of ciphers, $\{\theta_i\}$.

We emphasize the significance of preserving the distribution of text during watermarking, motivated by the following justifications: a) Stealthy Watermarking: A watermark that disrupts the original distribution of a language model lacks the attribute of stealthiness. Such alterations make it relatively straightforward to distinguish between watermarked and unwatermarked LMs through multiple instances of sampling. b) Industry-Level LLM Application: When contemplating the application of a watermark to industry-standard LLMs like ChatGPT and Bard, the primary consideration is to ensure that the watermark does not compromise the performance of these foundational LLMs. Any watermark that interferes with the original text distribution will inevitably impact the quality of generated text, an outcome that is unacceptable by industry stakeholders.

 In the next section, we introduce a reweight strategy with a distribution-preserving characteristic. This attribute guarantees that the text distribution remains unaltered even as we enhance the utilization of tokens from the green list during the watermarking process.

\begin{figure}
\centering
\includegraphics[width=0.5\textwidth]{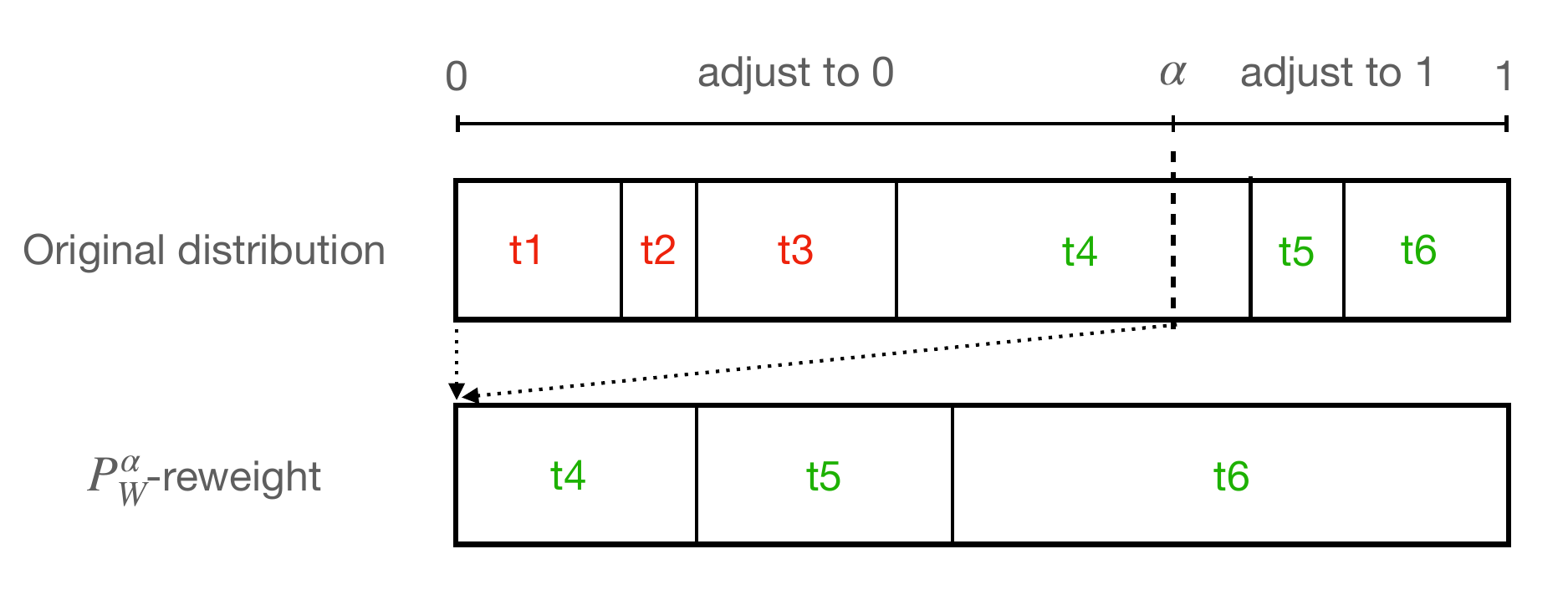}
\vspace{-10pt}
\includegraphics[width=0.5\textwidth]{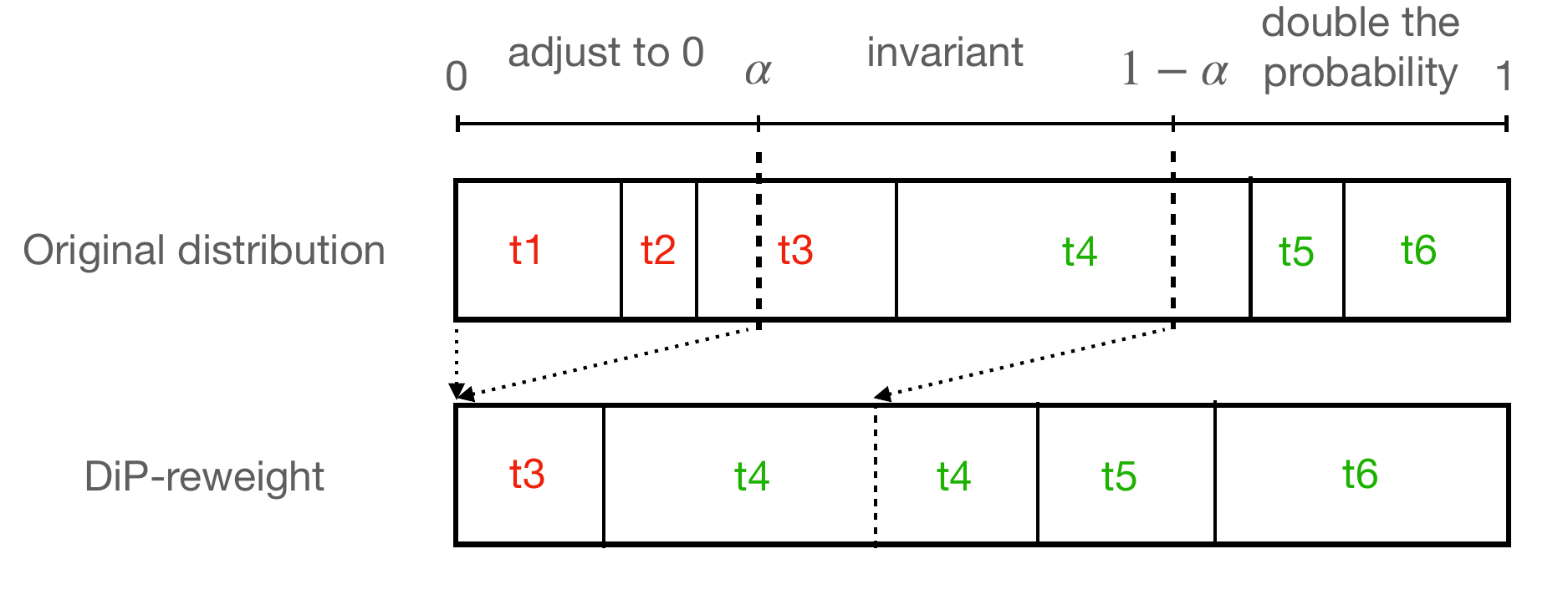}
\vspace{-15pt}
\caption{Illustration of the $P_W^\alpha$-reweight and DiP-reweight. \textbf{Top.} In $P_W^\alpha$-reweight, the token probabilities within the interval $[0, \alpha]$ are adjusted
to 0, while the rest are adjust to 1. \textbf{Bottom.} In DiP-reweight, the probability mass within $[0,\alpha]$ is transferred to the probability mass within $[1-\alpha,1]$.}
\label{fig:violin plot main}
\vspace{-13pt}
\end{figure}
\section{DiPmark}\label{sec:DiPmark}

\textbf{Motivation.} The reweight strategy presented in \citet{kirchenbauer2023watermark} disrupts the inherent text distribution when promoting the use of the green tokens during the sampling process. Such disruption would lead to biased sampling, seriously affecting the quality of the generated text. To address this issue, we design a novel reweight strategy that ensures the token distribution remains unaltered during the watermarking process. Contrary to the approach in \cite{kirchenbauer2023watermark} that promotes the use of all tokens from the green list, we emphasize increasing the \textit{sum of the probability} of the green-list tokens. In this way, the watermarked text, when exposed to the secret key, will still exhibit a bias towards the green-list tokens. Motivated by that, we design a reweight function, which preserves the text distribution during watermarking process. 

\textbf{Cipher space for watermarking.} 
Our considered watermark cipher space encompasses the permutations of the vocabulary set, denoted as $\Theta = \{V^{p}_1,...,V^{p}_{N!}\}$, wherein $V^{p}_i$ represents a permutation of $V$. As for the cipher distribution $P_\Theta$, we employ a uniform distribution over $\Theta$, ensuring that each permutation is equally probable for selection.

\textbf{Reweight strategy.} Let $\theta\in\Theta$ be a cipher, constituting a permutation of $V$. The probabilities of individual tokens can be arranged within the interval $[0,1]$ according to their respective positions in $\theta$. Given a fixed constant $\alpha$ in $[0,1]$, the token probabilities within the interval $[0,\alpha]$ are adjusted to $0$, while those in the interval $[\alpha,1]$ are scaled by a factor of $\frac{1}{1-\alpha}$.
 Let $\gamma\in[0,1]$ be the red-green list separator for the permuted token list, which is in accordance with the definition in \citet{kirchenbauer2023watermark}.
Through this reweight strategy, we can increase the sum of the probability of green-list tokens for \textit{arbitrary} permutation separator $\gamma$, as the green-list tokens consistently appear towards the end of the ordered set $\theta$. Below we present the formal definition of our reweight strategy.
\begin{definition}[$P_W^\alpha$-reweight strategy]
    Let $\theta=\{t_1,...,t_N\}$, which represents a permutation of $V$, and denote $P_M(\cdot|\x)$ as the original token distribution. Let $F^\alpha(i|\theta) := \frac{1}{1-\alpha}\max\{\sum_{j=1}^{i}P_M(t_j|\x)-\alpha,0\}$. The $P_W^\alpha$-reweight probability distribution is $P_W^\alpha(t_i|\x,\theta) = F^\alpha(i|\theta)-F^\alpha(i-1|\theta)$.
\end{definition}
It is easy to show that $P_W^\alpha(t_i|\x,\theta)$ is a distribution on $V$ for arbitrary $\alpha$. Firstly, as $F^\alpha(i|\theta)$ is monotonously increasing with $i$, we have $P_W^\alpha(t_i|\x,\theta) = F^\alpha(i|\theta)-F^\alpha(i-1|\theta)\geq 0$. Secondly, the sum of the probability of all tokens is  $\sum_{i=1}^{N}P_W^\alpha(t_i|\x,\theta) = \sum_{i=1}^{N}(F^\alpha(i|\theta)-F^\alpha(i-1|\theta)) = F^\alpha(N|\theta)=1$.

We wish to highlight the distinction between the probability quantile $\alpha$ and the red-green list separator $\gamma$. $\gamma$ serves as the partition for the permuted token list. In contrast, $\alpha$ separates the \textit{probability interval} $[0,1]$ of the permuted token list. Thus, both the $P_W^\alpha$-reweight and DiP-reweight (as subsequently defined) remain oblivious to $\gamma$, while still effectively promoting the probability of green list tokens.

Leveraging the symmetry of permutations, we can prove that a weighted combination of $P_W^\alpha$-reweight and $P_W^{1-\alpha}$-reweight yields a distribution-preserving reweight strategy. It is pivotal to recognize that both $P_W^\alpha$-reweight and $P_W^{1-\alpha}$-reweight increase the sum of the probability of green-list tokens. Therefore, the combined effect of these reweight functions still exhibits a preference for the green list tokens. The formal definition of our distribution-preserving reweight strategy is presented subsequently.
\begin{algorithm}[t]
\caption{DiPmark generator}\label{alg:DiPmark generator}
\SetAlgoLined
\KwIn{
 watermark key $k$, reweight parameter $\alpha$, prompt $\bm{x}_{-m:0}$, generate length $n\in\mathbb{N}$, context window length $a$, and permutation generation function $h$.}
 
Initialize texture key history $hist$.\\
\For{$i=1,\dots,n$}{
        Calculate the LM distribution for generating the $i$-th token $P_M(\cdot\mid\bm{x}_{-m:i-1})$. \\
        Generate a texture key $\s_{i}$ from $\x_{i-a:i-1}$.\\
    \eIf{$\s_{i} \in hist$}{
        Sample the next token $x_{i}$ using distribution $P_M(\cdot\mid\bm{x}_{-m:i-1})$.}
    {
        Update key history $hist.append(\s_{i})$}
    Generate the cipher $\theta_i=h(k,\s_{i})$.
    Sample the next token $x_{i}$ using distribution $P_W(\cdot|\x_{-m:i-1},h(k,\s_{i}))$.
}
\textbf{return} $\bm{x}_{1:n}$.
\end{algorithm}

\begin{definition}[DiP-reweight strategy]\label{def:dipreweight} Denote by $\theta=\{t_1,...,t_{N}\}$ the cipher, which is a permutation of $V$.
Given the original token distribution $P_M(t|\x),\forall t\in V$, where $\x\in\Sigma$ is the previous token sequence,
the DiP-reweight strategy is represented by $$P_W(t_i|\x,\theta):= (1-\alpha)P_W^\alpha(t_i|\x,\theta)+\alpha P_W^{1-\alpha}(t_i|\x,\theta).$$
\end{definition}
As both $P_W^\alpha$ and $P_W^{1-\alpha}$ are distributions on $V$ and $P_W(t_i|\x,\theta)$ is a convex combination of them, $P_W(t_i|\x,\theta)$ is also a distribution on $V$.

\begin{theorem}\label{thm:dipreweight}
     DiP-reweight is a distribution-preserving reweight strategy, i.e., for all $\x_{1:n} \in \cV$ and any $i \leq n$, it holds that $P_M(x_i|\x_{1:i-1}) = \bbE_{\theta_i\sim P_\Theta}[P_W(x_i|\x_{1:i-1},\theta_i)].$
\end{theorem}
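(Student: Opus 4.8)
The plan is to compute $\bbE_{\theta_i\sim P_\Theta}[P_W(x_i\mid\x_{1:i-1},\theta_i)]$ directly by exploiting the fact that $P_\Theta$ is uniform over all $N!$ permutations, and show it collapses to $P_M(x_i\mid\x_{1:i-1})$. Fix the context $\x:=\x_{1:i-1}$ and abbreviate $p(t):=P_M(t\mid\x)$ for $t\in V$. By Definition~\ref{def:dipreweight} it suffices to prove the single identity
\begin{equation*}
(1-\alpha)\,\bbE_{\theta}[P_W^\alpha(t\mid\x,\theta)] + \alpha\,\bbE_{\theta}[P_W^{1-\alpha}(t\mid\x,\theta)] = p(t)
\end{equation*}
for every fixed token $t\in V$, where the expectation is over a uniformly random permutation $\theta$. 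So the crux is understanding $\bbE_\theta[P_W^\alpha(t\mid\x,\theta)]$ for a general quantile $\alpha$.

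The key step is a symmetry/telescoping argument. Write $t=t_i$ for the position $i$ that $t$ occupies in $\theta$, so that $P_W^\alpha(t\mid\x,\theta)=F^\alpha(i\mid\theta)-F^\alpha(i-1\mid\theta)$ with $F^\alpha(i\mid\theta)=\frac{1}{1-\alpha}\max\{S_i-\alpha,0\}$ and $S_i=\sum_{j=1}^i p(t_j)$ the prefix-probability mass of the tokens preceding-or-equal-to $t$ in $\theta$. The difference $F^\alpha(i\mid\theta)-F^\alpha(i-1\mid\theta)$ equals $\frac{1}{1-\alpha}$ times the length of the overlap of the interval $(S_{i-1},S_i]$ (which has length $p(t)$) with $[\alpha,1]$. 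Now here is the probabilistic reformulation: if we place the tokens of $V$ in a uniformly random order and lay their probability masses end-to-end in $[0,1]$, then the sub-interval occupied by a fixed token $t$ is a random interval $I_t$ of fixed length $p(t)$ whose \emph{left endpoint} $U:=S_{i-1}$ is distributed as the sum of a uniformly-random subset of the remaining masses $\{p(s):s\neq t\}$. Hence
\begin{equation*}
\bbE_\theta[P_W^\alpha(t\mid\x,\theta)] = \frac{1}{1-\alpha}\,\bbE_U\big[\,\mathrm{length}\big((U,U+p(t)]\cap[\alpha,1]\big)\big] = \frac{1}{1-\alpha}\,\bbE_U\!\left[\int_{U}^{U+p(t)}\mathbf{1}\{x\ge\alpha\}\,dx\right].
\end{equation*}
Swapping expectation and integral (Fubini, everything bounded) gives $\frac{1}{1-\alpha}\int_0^1 \Pr_U[x-p(t)\le U\le x]\cdot\mathbf{1}\{x\ge\alpha\}\,dx$... but actually a cleaner route avoids computing this distribution explicitly: I would instead directly manipulate the finite sum $\bbE_\theta[F^\alpha(i\mid\theta)-F^\alpha(i-1\mid\theta)]$ using the exchangeability of the masses $\{p(t_j)\}_{j\ne i}$, yielding that $\bbE_\theta[P_W^\alpha(t\mid\x,\theta)]$ is a function of $p(t)$ and $\alpha$ alone of the form $g_\alpha(p(t))$, symmetric in a way that pairs $\alpha$ with $1-\alpha$.

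The decisive computation is then the complementarity between quantile $\alpha$ and quantile $1-\alpha$: the mass that $P_W^\alpha$ "kills" in $[0,\alpha]$ is, under the reversal permutation $\theta\mapsto\bar\theta$ (which has the same uniform law), exactly the mass that $P_W^{1-\alpha}$ "keeps", appropriately scaled. Concretely, using the reversal symmetry I expect to show $\bbE_\theta[P_W^{1-\alpha}(t\mid\x,\theta)]=\frac{1}{\alpha}\big(p(t)-(1-\alpha)\bbE_\theta[P_W^\alpha(t\mid\x,\theta)]\big)$, after which substituting into the convex combination $(1-\alpha)\bbE_\theta[P_W^\alpha]+\alpha\bbE_\theta[P_W^{1-\alpha}]$ makes the $P_W^\alpha$ terms cancel and leaves exactly $p(t)$. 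Summing over $t\in V$ (or equivalently over positions $i$) then gives the claimed identity $P_M(x_i\mid\x_{1:i-1})=\bbE_{\theta_i}[P_W(x_i\mid\x_{1:i-1},\theta_i)]$.

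I expect the main obstacle to be making the "reversal symmetry" step fully rigorous rather than heuristic: one must check that the map $\theta=(t_1,\dots,t_N)\mapsto\bar\theta=(t_N,\dots,t_1)$ sends $F^\alpha$-based quantities for $\alpha$ to $F^{1-\alpha}$-based quantities, keeping careful track of boundary/clipping terms (the $\max\{\cdot,0\}$) and the endpoints $S_0=0$, $S_N=1$. An alternative, perhaps safer, route is to bypass the symmetry trick and instead evaluate $\bbE_\theta[P_W^\alpha(t\mid\x,\theta)]$ in closed form via the distribution of the random prefix sum $U$ and then verify the convex combination equals $p(t)$ by direct algebra; this is more computational but has no hidden symmetry subtleties. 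Either way, linearity of expectation and the uniformity of $P_\Theta$ over permutations are the only structural facts needed.
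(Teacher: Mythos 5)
Your proposal is correct and is essentially the paper's own argument: the paper likewise pairs each permutation $\theta$ with its reversal $\bar\theta$ and proves the pointwise identity $P_W(t\mid\x,\theta)+P_W(t\mid\x,\bar\theta)=2P_M(t\mid\x)$ via $\max\{A,0\}-\max\{-A,0\}=A$, which is exactly the rigorous form of your reversal-symmetry step (indeed your relation holds pointwise, $(1-\alpha)P_W^{\alpha}(t\mid\x,\bar\theta)=P_M(t\mid\x)-\alpha P_W^{1-\alpha}(t\mid\x,\theta)$, not merely in expectation). The boundary/clipping bookkeeping you flag as the main obstacle goes through cleanly, so the alternative Fubini/prefix-sum computation is unnecessary.
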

We defer the proof of Theorem \ref{thm:dipreweight} to Appendix~\ref{sec:missing proof}. With the DiP-reweight approach, the generation of \textit{i.i.d.} ciphers, denoted as $\theta_i$, becomes essential for crafting a distribution-preserving watermark. Let $k$ represent a stochastic secret key derived from the key space $K$ following the distribution $P_K$, let $\s\in\cV$ be a texture key, which is a sub-sequence of the previously generated context. 
Denoted by $\x_{1:t-1}$ the context generated prior to time step $t$ , instances of texture keys encompass $x_{t-1}$, $\x_{t-3:t-1}$, and $\x_{1:t-1}$.
We introduce a hash function, $h(k,\s): K\times\cV\to \Theta$, orchestrating the mapping of a secret key in conjunction with a texture key. $\s\in\cV$ to a permutation of the token set $V$. In order to achieve distribution-preserving watermarking, the chosen hash function $h$ should adhere to the following conditions: 
a) For distinct (secret key, texture key) pairs, i.e., $(k_1,\s_1)\neq(k_2,\s_2)$, $h(k_1,\s_1)$ ought to be statistically independent from $h(k_2,\s_2)$, 
and 
b) Upon holding $\s$ constant, every $V^{p}_i\in\Sigma$ should exhibit a uniform likelihood of being selected given a random key, specifically, $\forall V^{p}_i\in\Sigma, \bbE_{k\sim P_K}[\bm{1}_{h(k,\s)=V^{p}_i}]=1/N!$. 

There exists hash functions meeting the above criteria, one example being the hash function introduced in \citet{kirchenbauer2023watermark}. Under such conditions, the cipher $\theta_i$ can be deemed \textit{i.i.d.} if the texture key $\s_i$ is distinctive for each instance. To ensure this uniqueness, a historical log is employed to retain texture keys generated in prior steps. If a texture key is identified in the historical log, another secret key will be utilized with the texture key to generate the cipher. The detailed methodology is shown in Alg.~\ref{alg:DiPmark generator}.

\begin{corollary}\label{col:disprev}
    DiPmark (Alg. \ref{alg:DiPmark generator}) is a distribution-preserving watermark, i.e., for all sequences $\x_{1:n} \in \cV$ and any positive integer $n$, we have $P_M(\x_{1:n}) = \bbE_{\theta_1,...,\theta_n}[P_W(\x_{1:n}|\theta_1,...,\theta_n)].$
\end{corollary}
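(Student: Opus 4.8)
\textbf{Proof proposal for Corollary~\ref{col:disprev}.}

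The plan is to reduce the corollary to Theorem~\ref{thm:dipreweight} by verifying that Algorithm~\ref{alg:DiPmark generator} produces an \emph{i.i.d.} family of ciphers $\{\theta_i\}$, after which the telescoping identity already displayed in the preliminaries (the chain $\bbE_{\theta_{1:n}}[P_W(\x_{1:n}\mid\theta_{1:n})] = \prod_i \bbE_{\theta_i}[P_W(x_i\mid\x_{1:i-1},\theta_i)] = P_M(\x_{1:n})$) closes the argument. So the work is entirely in justifying that the generator's bookkeeping of texture keys indeed yields independent, identically distributed ciphers, and in handling the branch where a texture key has already appeared.

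First I would fix an arbitrary target sequence $\x_{1:n}$ and analyze the distribution of the $i$-th token conditioned on $\x_{1:i-1}$ under the generator. There are two cases depending on whether the texture key $\s_i$ generated from $\x_{i-a:i-1}$ is already in $hist$. In the ``fresh'' case the token is drawn from $P_W(\cdot\mid\x_{-m:i-1}, h(k,\s_i))$ with a texture key distinct from all previously used ones; by hash condition (a) the cipher $\theta_i = h(k,\s_i)$ is independent of $\theta_1,\dots,\theta_{i-1}$, and by condition (b) it is uniform on $\Theta$, hence distributed as $P_\Theta$. Theorem~\ref{thm:dipreweight} then gives $\bbE_{\theta_i}[P_W(x_i\mid\x_{1:i-1},\theta_i)] = P_M(x_i\mid\x_{1:i-1})$. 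In the ``repeated'' case the generator samples directly from $P_M(\cdot\mid\x_{-m:i-1})$, so the step is trivially distribution-preserving with no randomness from the cipher needed. In either case, taking the expectation over the cipher randomness used at step $i$ (conditioned on everything earlier) reproduces exactly the original conditional $P_M(x_i\mid\x_{1:i-1})$.

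Next I would assemble these per-step identities into the global statement by induction on $n$ (equivalently, by iterated expectations peeling off the last coordinate): writing $P_W(\x_{1:n}\mid\theta_{1:n}) = \prod_{i=1}^n P_W(x_i\mid\x_{1:i-1},\theta_i)$ and taking expectation over the cipher randomness from the outside in, each factor collapses to $P_M(x_i\mid\x_{1:i-1})$, whose product is $P_M(\x_{1:n})$. Care is needed because whether step $i$ is ``fresh'' or ``repeated'' depends on the realized prefix $\x_{1:i-1}$; but since we are computing the probability assigned to a \emph{fixed} sequence $\x_{1:n}$, the prefix is fixed, so the case split is deterministic given $\x_{1:i-1}$ and the factorization goes through cleanly.

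The main obstacle I anticipate is the subtlety in the ``repeated texture key'' branch and, more generally, making precise what ``$\{\theta_i\}$ is \emph{i.i.d.}'' means when the ciphers are only \emph{used} at a data-dependent subset of steps. The honest framing is that for each fixed $\x_{1:n}$ the relevant ciphers are those at the fresh steps, these are mutually independent and each $\sim P_\Theta$ by the two hash conditions, and the non-fresh steps contribute deterministic (cipher-free) factors equal to $P_M$; one should be slightly careful that re-using ``another secret key'' on a collision does not secretly correlate a later cipher with an earlier one — this is exactly what hash condition (a) rules out, since distinct $(k,\s)$ pairs give independent outputs. Once this is spelled out, the rest is the routine telescoping already exhibited in the text.
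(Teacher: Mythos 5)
Your proposal takes essentially the same route as the paper: the paper's proof of this corollary is a one-line appeal to combining the distribution-preserving property of DiP-reweight (Theorem~\ref{thm:dipreweight}) with the independence of the ciphers, closed by exactly the telescoping identity $\bbE_{\theta_{1:n}}[\prod_i P_W(x_i|\x_{1:i-1},\theta_i)]=\prod_i\bbE_{\theta_i}[P_W(x_i|\x_{1:i-1},\theta_i)]=P_M(\x_{1:n})$ that you cite. Your extra care with the repeated-texture-key branch and with what ``i.i.d.\ ciphers'' means at data-dependent steps only fills in details the paper leaves implicit; the argument is the same.
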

This can be easily validated by combining the distribution-preserving property of DiP-reweight and the independence of ciphers $\theta_i$.

\begin{algorithm}[t]
\caption{DiPmark detector}\label{alg:DiPmark detector}
\SetAlgoLined
\KwIn{
text $\x_{1:n}$, watermark key $k$, volume of the token set $N$, permutation generation function $h$, green list separator $\gamma$, context window length $a$, and threshold $z$.}
    Initialize the green token indexer of $\gamma$: $L_G(\gamma)=0$.\\
    \For{$i = 2,...,n$}{
    Generate a texture key $\s_{i}$ based on $\x_{i-a:i-1}$.\\
    Generate the permutation of token set $\theta_i=h(k,\s_{i})$.\\
    Calculate the list of green tokens via $G = \theta_i[\lceil \gamma N\rceil:N]$.\\
    \textbf{if} $x_{i} \in G$:\\
    \quad\quad $L_G(\gamma)=L_G(\gamma)+1$.
    }
     Calculate the score:
    $\Phi(\gamma,\x_{1:n}) = \frac{L_G(\gamma)}{n}-(1-\gamma)$.

\textbf{return} $\Phi(\gamma,\x_{1:n})>z$.
\end{algorithm}
\vspace{-0.2cm}
\section{DiPmark Detection}\label{sec:detection}
\vspace{-0.1cm}
We leverage a hypothesis test to identify the presence of DiPmark. In the context of a predetermined red-green list separator $\gamma\in [0,1]$, we classify the initial $\lceil\gamma N\rceil$ tokens within the token set permutation as belonging to the red list, while the remaining tokens are categorized as part of the green list. Given a text sequence $\x_{1:n}$, we establish the null hypothesis $H_0$: \textit{$\x_{1:n}$ is generated without any awareness of DiPmark.} Below we design a statistic, named ``green token ratio'', for conducting the hypothesis test.

\begin{definition}[Green token ratio]\label{def:green token ratio} Let $L_G(\gamma)$ be the count of green tokens within $\x_{1:n}$, where $\gamma$ is the predetermined red-green list separator. The green token ratio is give by
    $\Phi(\gamma,\x_{1:n}):=L_G(\gamma)/n-(1-\gamma).$
\end{definition}

The green token ratio quantifies the bias towards green tokens within the text sequence. The term $L_G(\gamma)/n$ signifies the proportion of green tokens within a sequence of tokens, while $1-\gamma$ denotes the expected green token proportion in an unwatermarked sequence. Under the null hypothesis $H_0$, $L_G(\gamma)$ follows a binomial distribution with parameters $p=(1-\gamma)$ and $n$ total trials, i.e., $L_G(\gamma)\sim\textrm{Binomial}(n,1-\gamma)$. 
The reason for this is that each token is randomly assigned to either the red or green list in the absence of our watermarking rule. We derive the subsequent concentration bound of the green token ratio $\Phi(\gamma,\x_{1:n})$:
\begin{theorem}
    [Concentration bound of $\Phi(\gamma,\x_{1:n})$]\label{thm:concenbound}
    Let $\Phi(\gamma,\x_{1:n}):=L_G(\gamma)/n-(1-\gamma)$, where $L_G(\gamma)\sim\textrm{Binomial}(n,1-\gamma)$. We have $\forall t\in\bbR$, $$\Pr(\Phi(\gamma,\x_{1:n})\geq t)\leq \exp(-n\mathbb{KL}(t+1-\gamma||1-\gamma)),$$
where $\mathbb{KL}(p||q):= p\log\frac{p}{q}+(1-p)\log\frac{1-p}{1-q}$ is the Kullback-Leibler divergence.
\end{theorem}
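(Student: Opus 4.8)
The plan is to recognize the stated bound as a standard Chernoff-type large-deviation inequality for a Binomial random variable, and to derive it via the exponential moment method. Writing $S_n := L_G(\gamma) \sim \textrm{Binomial}(n, 1-\gamma)$ and $p := 1-\gamma$, the event $\{\Phi(\gamma,\x_{1:n}) \geq t\}$ is exactly $\{S_n \geq n(t+p)\}$, so there is nothing to prove unless $0 \le t+p \le 1$ (and the inequality is vacuous when $t \le 0$ since the right-hand side is then $\ge 1$; this is worth a sentence). Assume therefore $p \le t+p \le 1$ and set $q := t+p$ for brevity.

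First I would apply Markov's inequality to the nonnegative random variable $e^{\lambda S_n}$ for an arbitrary $\lambda > 0$:
\begin{equation*}
\Pr(S_n \geq nq) = \Pr\left(e^{\lambda S_n} \geq e^{\lambda n q}\right) \leq e^{-\lambda n q}\, \bbE\!\left[e^{\lambda S_n}\right].
\end{equation*}
Next, since $S_n$ is a sum of $n$ i.i.d. $\ber(p)$ variables, the moment generating function factorizes as $\bbE[e^{\lambda S_n}] = (1-p+pe^\lambda)^n$, giving
\begin{equation*}
\Pr(S_n \geq nq) \leq \left(e^{-\lambda q}\left(1-p+pe^{\lambda}\right)\right)^{n}.
\end{equation*}
The final step is to optimize over $\lambda > 0$. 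Minimizing $g(\lambda) := -\lambda q + \log(1-p+pe^\lambda)$ by setting $g'(\lambda) = 0$ yields $pe^\lambda/(1-p+pe^\lambda) = q$, i.e. $e^{\lambda^\star} = \frac{q(1-p)}{p(1-q)}$, which is $\geq 1$ precisely because $q \geq p$, so the optimizer is admissible. Substituting $\lambda^\star$ back and simplifying, one finds $1-p+pe^{\lambda^\star} = \frac{1-p}{1-q}$ and hence $g(\lambda^\star) = -q\log\frac{q(1-p)}{p(1-q)} + \log\frac{1-p}{1-q} = -\big(q\log\frac{q}{p} + (1-q)\log\frac{1-q}{1-p}\big) = -\mathbb{KL}(q\|p)$. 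Raising to the $n$-th power gives $\Pr(\Phi(\gamma,\x_{1:n}) \geq t) \leq \exp(-n\,\mathbb{KL}(t+1-\gamma \,\|\, 1-\gamma))$, as claimed.

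The argument is essentially routine; the only places demanding care are the bookkeeping of the degenerate cases ($t \le 0$, and the boundary $q \in \{0,1\}$ where the KL term is interpreted as a limit / the bound is trivial) and the algebraic simplification showing that the optimized exponent collapses exactly to the KL divergence — this last simplification is the one step where a sign slip is easy, so I would carry it out explicitly. No step constitutes a genuine obstacle; this is a clean instance of the Cramér–Chernoff bound for Bernoulli sums.
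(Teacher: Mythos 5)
Your proposal is correct and follows essentially the same route as the paper's proof: apply Markov's inequality to the exponential moment of the Binomial sum, factor the MGF over the independent Bernoulli terms, and optimize the free parameter (your $e^{\lambda^\star}=\frac{q(1-p)}{p(1-q)}$ is exactly the paper's $h=\ln\frac{\gamma(1-\gamma+t)}{(1-\gamma)(\gamma-t)}$), collapsing the exponent to the KL divergence. Your explicit handling of the degenerate cases $t\le 0$ and the admissibility of the optimizer is a small bit of added care the paper omits, but it does not change the argument.
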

We proceed to reject the null hypothesis and detect the watermark if $\Phi(\gamma,\x_{1:n})$ surpasses a predefined threshold. For instance, setting the threshold as $\Phi(\gamma,\x_{1:n})\geq1.517/\sqrt{n}$ results in rejecting $H_0$ (indicating watermark presence) while maintaining a false positive rate below 1\%. Our detection algorithm is shown in Alg.~\ref{alg:DiPmark detector}.
Noting that the concentration bound of $\Phi(\gamma,\x_{1:n})$ scales proportionally with $n$ times the green token ratio. With a fixed green token ratio $\Phi(\gamma,\x_{1:n})$, detecting longer sequences becomes more straightforward because they will show a lower false positive rate. The validity of this analysis is also confirmed in Section~\ref{sec:add2}.

\textbf{Difference between our detection algorithm and \citet{kirchenbauer2023watermark}.} It is noteworthy that we diverge from \citet{kirchenbauer2023watermark} by avoiding the use of the z-test statistic $(L_G(\gamma)-(1-\gamma)n)/\sqrt{n\gamma(1-\gamma)}$. The z-test assumes a normal distribution for the test statistic. This approximation is imprecise, which could lead to an inaccurate estimation of the p-value, consequently resulting in the wrongful classification of sentences not generated by LMs as being LM-produced. For example, given $n=100,\gamma=0.5,L_G(\gamma) = 57$, the p-value of the z-test statistic is about 0.08, indicating that this sentence would be identified as watermarked at 10\% FPR (false positive rate). However, in our case, the p-value is around 0.37, suggesting that we cannot determine this sentence as watermarked. In Table~\ref{tab:test stat comp3}, we compare the empirical FPR of the two test statistics with their theoretical guaranteed FPR on 500 non-watermarked sentences. We can see clearly the empirical FPR is larger than its theoretical guarantee, which validates our assertion that z-test is imprecise on watermark detection. A detailed discussion can be found in Section~\ref{sec:comp test statistic}.

\begin{table}[t]
\centering
\vspace{-0.2cm}
\caption{Comparison of different test statistics on theoretical FPR (false positive rate) and empirical FPR with 500 non-watermarked sentences. We can see clearly the empirical FPR of z-test is continuously greater than its theoretical guarantee.}
\label{tab:test stat comp3}
\scalebox{0.7}{
\begin{tabular}{lcc}
\toprule
 False positive samples/All samples               & $p<0.10$ (10\%FPR)  & $p<0.01$ (1\%FPR) \\ \midrule
z-test \cite{kirchenbauer2023watermark}  & 56/500 (11.2\% FPR)                                      & 12/500 (2.4\% FPR)                   \\
DiPmark statistic & 13/500 (2.6\% FPR)                                     & 4/500  (0.5\% FPR)         \\ \bottomrule
\end{tabular}
}
\vspace{-0.5cm}
\end{table}

\textbf{Detecting efficiency discussion.} Similar to the detection algorithms presented in \cite{kirchenbauer2023watermark}, our watermark detection process is highly efficient, requiring only a single pass through the provided text sequence. However, it is worth noting that the detection algorithm outlined in \citet{kuditipudi2023robust} necessitates iterating through the sequence a staggering 5000 times, which is notably inefficient when compared to our approach. Besides, \citet{hu2023unbiased} requires prompt and language model API during detection, which is also not practical or efficient. A detailed empirical comparison is in Section~$\ref{sec:Efficiency}$.

\begin{figure*}
    \centering
    \includegraphics[width=0.85\textwidth]{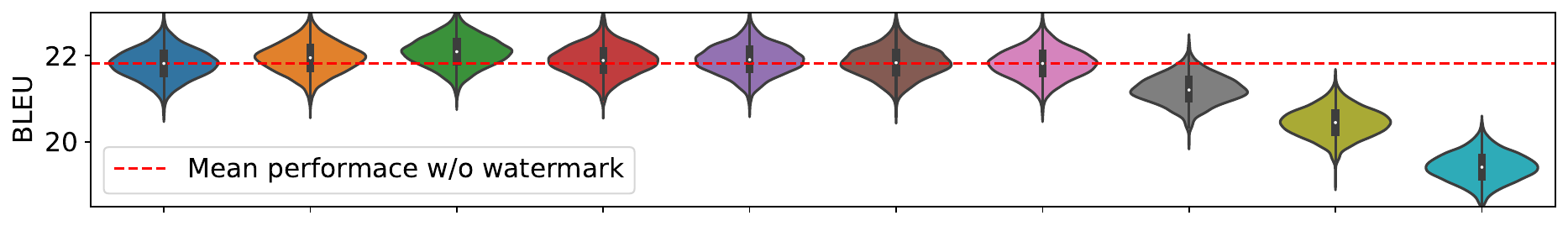}
    \includegraphics[width=0.85\textwidth]{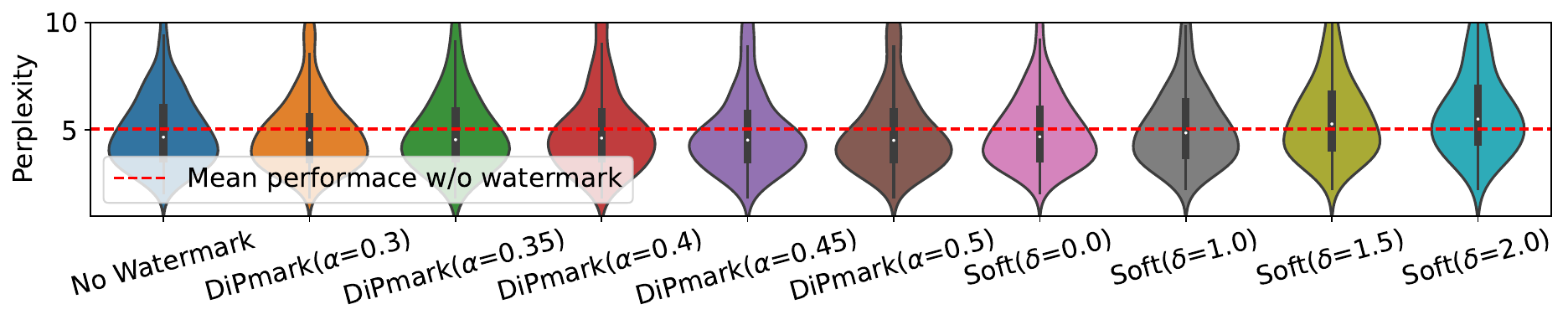}
    \vspace{-13pt}
    \caption{Empirical verification of distribution-preserving property of DiPmark. \textbf{Top:} Violin plot of Machine Translation BLEU. \textbf{Bottom:} Violin plot of Text Summarization Perplexity. We can see the Soft watermarks \cite{kirchenbauer2023watermark} significantly degrade the text quality, while DiPmarks preserve the text quality.}
    \label{fig:violin plot main}
    \vspace{-13pt}
\end{figure*}
\vspace{-0.2cm}
\section{DiPmark is Provably Resilient Against Text Modification}\label{sec:provable robust}
\vspace{-0.1cm}
 In this section, we show that DiPmark possesses provable robustness against arbitrary textual modification attacks with a guaranteed fixed false positive rate. Notably, the existing watermarking approaches are not provable resilient with a guaranteed FPR. \citet{kirchenbauer2023watermark} and \citet{zhao2023provable} assume that the test statistic follows a normal distribution, leading to imprecise guarantee of FPR according to our discussion in Section~\ref{sec:detection}.

\textbf{Problem formulation.} Let $\x_{1:n}$ represent a watermarked sentence. To generate the cipher $\theta$ at the i-th iteration, we employ a hash function $h$, a confidential key $k$, and a texture key $\s:=\x_{i-a:i-1}, a\geq 1$. This indicates that the preceding $a$ tokens serve as the texture key for the watermarking of the token situated at position $i$. During the detection phase, the formula $\Phi(\gamma,\x_{1:n}):=L_G(\gamma)/n-(1-\gamma)$ coupled with a threshold $z$ is applied to ascertain if the text has been watermarked. Notably, within $\Phi(\gamma,\x_{1:n})$, the sole variable associated with textual modification assaults is $L_G(\gamma)$. Consequently, our primary objective is to discern the most severe reduction in $L_G(\gamma)$ for a single token alteration.

\textbf{Worst-case perturbation analysis.} Supposing the token $x_i$ in $\x_{1:n}$ undergoes modification, this will lead to a reduction in $L_G(\gamma)$ through two ways: a) Initially, the token $x_i$ may be categorized as a green token, but post-alteration, it either gets eliminated or transitions into a red token, leading to a potential decline in the number of green tokens $L_G(\gamma)$ by at most 1. b) Since the list of red-green tokens for $x_{i+1},...,x_{i+a}$ is generated by hashing the token $x_i$, its subsequent alteration could cause $x_{i+1},...,x_{i+a}$ to turn into red tokens. In this scenario, the number of green tokens $L_G(\gamma)$ may shrink by a maximum of $a$. As a result, the greatest decline in $L_G(\gamma)$ for a single token modification stands at $a+1$.

\begin{definition}[Certified radius]
    Let $\epsilon\in[0,1]$ denote the fraction of altered tokens. 
    The certified radius of a watermarked sequence is $\epsilon_0$, if for all perturbations confined within the budget $\epsilon\leq\epsilon_0$, the altered watermarked sequence can still be recognized as watermarked.
\end{definition} 

\begin{theorem}\label{thm:provable nrobust}
    Given $\Phi(\gamma,\x_{1:n}):=L_G(\gamma)/n-(1-\gamma)$ and a threshold $z$, the certified radius of the watermarked sequence $\x_{1:n}$ is 
    $\epsilon_0 = \frac{\Phi(\gamma,\x_{1:n})-z}{2+a-\gamma+z}.$
\end{theorem}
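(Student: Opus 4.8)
The plan is to bound the worst-case decrease of $L_G(\gamma)$ under an $\epsilon$-fraction perturbation, translate this into a lower bound on the perturbed green token ratio, and then find the largest $\epsilon$ for which this lower bound still exceeds the threshold $z$. The worst-case analysis preceding the statement already establishes the key combinatorial fact: modifying a single token decreases $L_G(\gamma)$ by at most $a+1$ (one for the token itself possibly flipping from green to red or being deleted, and up to $a$ more because that token is part of the texture key for the next $a$ tokens). However, there is a subtlety when tokens are deleted rather than substituted: deleting tokens also shrinks the sequence length $n$, which I will need to track carefully since $\Phi$ depends on $n$ through both $L_G(\gamma)/n$ and implicitly through the count.

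First I would let $\x'$ denote the perturbed sequence obtained by altering at most $\epsilon n$ tokens. By the worst-case perturbation analysis, the green count satisfies $L_G'(\gamma) \geq L_G(\gamma) - (a+1)\epsilon n$ in the substitution case. For the general case (including insertions/deletions), I would argue that each altered token still contributes at most $a+1$ to the loss, and — to be safe about length changes — consider how $n$ itself can change. A clean way to handle this is to observe that after perturbing $\epsilon n$ tokens the new length $n'$ satisfies $n' \geq (1-\epsilon)n$ in the worst relevant direction (deletions), and the green count drops by at most $(a+1)\epsilon n$ relative to the original while new inserted tokens can be assumed adversarially red. Then
\begin{equation*}
\Phi(\gamma,\x') = \frac{L_G'(\gamma)}{n'} - (1-\gamma) \geq \frac{L_G(\gamma) - (a+1)\epsilon n}{n} - (1-\gamma),
\end{equation*}
where I used that replacing $n'$ by the larger $n$ in the denominator (when the numerator is nonnegative) only decreases the fraction, giving a valid lower bound; I would double-check the sign conventions here, since this is the step most likely to hide an off-by-$\gamma$ error. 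Rewriting the right-hand side in terms of $\Phi(\gamma,\x_{1:n})$ gives $\Phi(\gamma,\x') \geq \Phi(\gamma,\x_{1:n}) - (a+1)\epsilon$.

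Next I would set the requirement $\Phi(\gamma,\x') > z$ and solve for the threshold on $\epsilon$. Using the bound above, it suffices that $\Phi(\gamma,\x_{1:n}) - (a+1)\epsilon \geq z$, i.e. $\epsilon \leq (\Phi(\gamma,\x_{1:n}) - z)/(a+1)$. But the stated certified radius has denominator $2 + a - \gamma + z$, not $a+1$, so my naive bound is too crude: the discrepancy tells me the intended argument accounts for the change in $n$ more precisely. The extra terms $1 - \gamma + z$ in the denominator strongly suggest that the correct accounting keeps $n'$ in the denominator and that the adversary, by deleting green tokens, simultaneously shrinks $n$ and $L_G$; one must then compare $\frac{L_G - (a+1)\epsilon n}{n - (\text{something})\epsilon n} - (1-\gamma)$ against $z$ and solve the resulting linear inequality in $\epsilon$, where the cross term $z \cdot \epsilon$ and a $(1-\gamma)\epsilon$ term naturally appear. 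So the refined plan is: (i) carefully set up $\Phi(\gamma, \x')$ with both numerator and denominator perturbed, using $L_G'(\gamma) \geq L_G(\gamma) - (a+1)\epsilon n$ and $n' \geq n - \epsilon n$ with inserted tokens counted as red; (ii) impose $\Phi(\gamma,\x') \geq z$; (iii) collect all $\epsilon$ terms — including the $z\epsilon$ from clearing the denominator and the $(1-\gamma)\epsilon$ — and solve the linear inequality, which should yield exactly $\epsilon_0 = \frac{\Phi(\gamma,\x_{1:n}) - z}{2 + a - \gamma + z}$.

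\textbf{Main obstacle.} The delicate part is the bookkeeping when the perturbation includes deletions and insertions rather than pure substitutions: the sequence length $n$ appears in the denominator of $\Phi$, so one cannot simply say "$L_G$ drops by at most $(a+1)\epsilon n$" and be done. I expect the constant $2 + a - \gamma + z$ to emerge precisely from tracking (a) the $a+1$ loss per modified token, (b) the additional $-\epsilon n(1-\gamma)$-type shift because the baseline $1-\gamma$ is now measured against a shorter sequence, and (c) the $z\epsilon$ term from rearranging the inequality $\frac{\text{num}}{\text{denom}} \geq z$ into linear form. Getting these three contributions to add up to exactly the claimed denominator — rather than something like $a+1$ or $a + 2 - \gamma$ — is where I would need to be most careful, and it is the step most likely to require revisiting the precise model of which tokens an adversary can most harmfully target.
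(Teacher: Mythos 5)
Your overall skeleton matches the paper's: bound the worst-case drop of $L_G(\gamma)$ by $(a+1)\epsilon n$, account for the change in sequence length, and solve a linear inequality in $\epsilon$. But you resolve the one genuinely delicate step — which direction of length change is extremal — incorrectly, and this is exactly the step that produces the denominator $2+a-\gamma+z$. You claim the worst case is deletions, with $n'\geq(1-\epsilon)n$, and your first displayed inequality replaces $n'$ by $n$ in the denominator on the grounds that $n\geq n'$; that is only valid for substitutions and deletions and fails for insertions. In fact the opposite is true: once the numerator is pinned at its lower bound $L_G(\gamma)-(a+1)\epsilon n$, the ratio $\frac{L_G(\gamma)-(a+1)\epsilon n}{n'}$ is \emph{decreasing} in $n'$, so the adversary's best move is to make $n'$ as large as possible, i.e.\ all $\epsilon n$ modifications are \emph{insertions} and $n'=(1+\epsilon)n$. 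That is what the paper does: it solves
\begin{equation*}
\frac{L_G(\gamma)-(a+1)\epsilon n}{(1+\epsilon)n}-(1-\gamma)\geq z,
\end{equation*}
and clearing the denominator contributes $+\epsilon(1-\gamma+z)$ to the coefficient of $\epsilon$, giving $(a+1)+(1-\gamma+z)=2+a-\gamma+z$.

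Your refined plan, by contrast, would compare against $n'=(1-\epsilon)n$; carrying out that algebra gives $\epsilon\leq\frac{\Phi(\gamma,\x_{1:n})-z}{a+\gamma-z}$, which is not the stated radius (and is not a valid certificate, since it ignores the more damaging insertion attack). So the gap is concrete: you have the right three ingredients (the $a+1$ per-token loss, the $(1-\gamma)$ baseline shift, the $z\epsilon$ cross term), but the sign of the length-change contribution is flipped because you picked the wrong extremal perturbation. Intuitively, deleting a token costs the adversary a unit of length that partially offsets the lost green count, whereas inserting a red token dilutes the ratio for free while still being able to corrupt the texture keys of the $a$ following tokens.
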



\begin{table}[]
\vspace{-0.2cm}
\caption{Distribution-preserving performance of different watermarking methods on machine translation and text summarization. We use F1 scores of BERTScore and scale BERTScore with a factor of 100.}
\label{tab:text quality comp}
\scalebox{0.67}{
\begin{tabular}{l|cc|cc}
\toprule
{}    & \multicolumn{2}{c|}{Machine Translation} & \multicolumn{2}{c}{Text Summarization} \\
\midrule
\textbf{}          & BERTScore$\uparrow$    & BLEU$\uparrow$   & BERTScore$\uparrow$                                     & Perplexity$\downarrow$                                                                      \\
\midrule
\textbf{No Watermark}   & 55.9±0.3           & 21.8±0.3        & 32.73±0.08                                          & 5.021±0.018    \\
\midrule
\textbf{Soft ($\delta$=0.0)}     & 56.0±0.3           & 21.8±0.3        & 32.73±0.08 & 5.021±0.018 \\
\textbf{Soft ($\delta$=1.0)}     & 55.7±0.3           & 21.2±0.3        & 32.37±0.08                                          & 5.309±0.019                                                                                  \\ 
\textbf{Soft ($\delta$=1.5)}     & 55.0±0.3           & 20.4±0.3        & 32.09±0.08                                          & 5.660±0.021                                         \\
\textbf{Soft ($\delta$=2.0)}     & 53.9±0.3           & 19.4±0.3        & 31.46±0.08                                          & 6.241±0.023                                         \\
\midrule
{\textbf{\citet{kuditipudi2023robust}}}    & 56.0±0.3       & 21.7±0.3          & 32.70±0.08          & 5.021±0.021          \\ 
{\textbf{\citet{hu2023unbiased}}}           & 56.3±0.3       & 21.8±0.3          & 32.71±0.08          & 5.023±0.018        \\ 
\midrule
\textbf{DiPmark ($\alpha$=0.3)}  & 56.1±0.3           & 22.0±0.3        & 32.79±0.08                                          & 5.014±0.018                                                                                 \\
\textbf{DiPmark ($\alpha$=0.35)} & 56.2±0.3           & 22.1±0.3        & 32.74±0.08                                          & 4.998±0.018                                                                              \\
\textbf{DiPmark ($\alpha$=0.4)}  & 56.1±0.3           & 21.9±0.3        & 32.77±0.08                                          & 5.001±0.018                                                                               \\
\textbf{DiPmark ($\alpha$=0.45)} & 56.2±0.3           & 21.9±0.3        & 32.69±0.08                                          & 5.024±0.018                                                                             \\
\textbf{DiPmark ($\alpha$=0.5)}  & 56.2±0.3           & 21.8±0.3        & 32.72±0.08                                          & 5.014±0.018                                                                    \\ 
\bottomrule
\end{tabular}
}
\vspace{-0.3cm}
\end{table}

\vspace{-0.2cm}
\section{Experiments}
\vspace{-0.1cm}
Our experimental section consists of five parts. In the first three parts, we compare the distribution-preserving property, accessibility, and resilience of DiPmark with the SOTA watermark methods \citep{kirchenbauer2023watermark,kuditipudi2023robust,hu2023unbiased}. 
In the fourth part, we compare the detectability of DiPmark with the Soft watermark introduced in \cite{kirchenbauer2023watermark}. In the final part, we validate the practicality of DiPmark by conducting a case study on GPT-4~\cite{openai2023gpt}. 
Detailed experimental settings are in Appendix~\ref{sec:detailed_experiment_setup}.


\textbf{General experimental observation.} 
 We find that our DiPmark, configured with $\alpha=0.45$, exhibits comparable levels of detectability and robustness comparing with the Soft watermark ($\delta = 1.5$) \cite{kirchenbauer2023watermark}. Importantly, our DiPmark maintains the same level of text quality as the original language model, owing to its inherent distribution-preserving property.


 \vspace{-0.1cm}
\subsection{{Distribution-preserving Property}}\label{sec:dip perf exp}
 \vspace{-0.1cm}
We will empirically verify the distribution-preserving property of different watermarks. Since DiPmark is \textbf{provably} distribution-preserving (Corollary~\ref{col:disprev}), we use this experiment as a support for the theorem.

We follow the evaluation process of \cite{hu2023unbiased}, where we assess the performance of DiPmark with two seq2seq tasks: text summarization (TS) and machine translation (MT). For the TS task, we employ the BART-large model \citep{liu2020multilingual}. For MT task, we focus on English-to-Romanian translation. We employ the Multilingual BART (MBart) model \citep{liu2020multilingual} on the WMT’14 En-Ro corpus. Specifically for DiPmark, we select values for $\alpha$ from the set $\{0.3, 0.35, 0.4, 0.45, 0.5\}$, while for the Soft watermark \citep{kirchenbauer2023watermark}, we choose green list bias values $\delta$ from the set $\{0.0, 1.0, 1.5, 2.0\}$ alongside a fixed green list separator $\gamma=0.5$, indicating that 50\% of tokens are green while the remainder are red. Notice, Soft watermark with $\delta=0.0$ is equivalent to no watermark since it does not promote the probability of green list tokens.


Upon examining Figure~\ref{fig:violin plot main} and Table~\ref{tab:text quality comp}, we find across all $\alpha$ values in the range $\{0.3, 0.35, 0.4, 0.45, 0.5\}$, the BLEU scores in the machine translation tasks and the perplexity values in the text summarization tasks remain consistently similar between DiPmark and the original language model. However, as we increase the $\delta$ values in the Soft watermark, a notable degradation in text quality becomes evident. A more comprehensive set of results is provided in Appendix~\ref{sec:add1}.
 \vspace{-0.2cm}
\subsection{Accessibility}\label{sec:Efficiency}
 \vspace{-0.1cm}
We compare the time for detecting 1 and 1,000 watermarked sequences with different detection algorithm. The task is text generation with LLaMA-2 (chat, 7B). We use the same GPU (NVIDIA A6000) for all experiments. From Table \ref{tab:efficient} we see the detecting algorithms of DiPmark are efficient without accessing LMs, while \citet{hu2023unbiased} requires additional access to LMs and prompts, and \citet{kuditipudi2023robust} needs significantly longer time.
\begin{table}[h]
\vspace{-0.2cm}
\centering
\caption{{Comparison of accessibility of different watermarks.}}
\label{tab:efficient}
\centering
\scalebox{0.85}{
\begin{tabular}{l|c|c|c}
\toprule
 Number of samples& \textbf{1 } & \textbf{1,000 } &LM \& prompt access\\ \midrule
Soft watermark & 0.3s & 92s & No\\ 
\citet{kuditipudi2023robust} & 80s & 12h & No\\ 
\citet{hu2023unbiased} & 3.4s & 412s & Yes\\ 
DiPmark & 0.3s & 90s & No\\ \bottomrule
\end{tabular}
}
\vspace{-0.5cm}
\end{table}

  \begin{table}[h]
\caption{{AUC score of different watermarks under varying attack strength $\epsilon$ on text generation task. Each row is evaluated over around 500 watermarked and 500 non-watermarked sequences of length n = 260 ± 5.}}
\label{tab:AUCpoem}
\centering
\scalebox{0.88}{
\begin{tabular}{l|cccc}
\toprule
AUC                                                                    & \multicolumn{4}{c}{Random text modification} \\ \midrule
 {}               &  {$\epsilon$ = 0.0} &  {$\epsilon$ = 0.1} &  {$\epsilon$ = 0.2} &  { $\epsilon$= 0.3}                     \\ \midrule
 Soft watermark            &  \textbf{0.9990}  &  \textbf{0.9883}  &  \textbf{0.9521}  &  0.8033        \\
 \citet{kuditipudi2023robust} &  0.9951           &  0.9461           &  0.8979           &  0.7815   \\
 \citet{hu2023unbiased}         &  0.9936           &  0.9297           &  0.8391           &  0.7574    \\
 DiPmark ($\alpha$=0.45)       &  \textbf{0.9990}  &  0.9859           &  0.9515           &  \textbf{0.8060} \\
 \bottomrule
\end{tabular}
}\\
\vspace{0.15cm}
\centering
\scalebox{0.87}{
\begin{tabular}{l|cccc}
\toprule
AUC                                                                    & \multicolumn{4}{c}{Paraphrasing attack}                                                                                                                           \\ \midrule
 {}               & {$\epsilon$ = 0.0}                       & {$\epsilon$ = 0.1}                       & {$\epsilon$ = 0.2}                       & {$\epsilon$ = 0.3}                       \\ \midrule
 Soft watermark &  \textbf{0.9990} &  \textbf{0.9894} &  0.9469          &  0.8157          \\
 \citet{kuditipudi2023robust} &  0.9951          &  0.9529.         &  0.9013          &  0.7711          \\
 \citet{hu2023unbiased} &  0.9936          &  0.9368          &  0.8325         &  0.7661          \\
 DiPmark ($\alpha$=0.45)  &  \textbf{0.9990} &  0.9871          &  \textbf{0.9503} &  \textbf{0.8216}\\
 \bottomrule
\end{tabular}
}
\end{table}

\begin{figure}[h]
    \centering
    \includegraphics[width=0.46\textwidth]{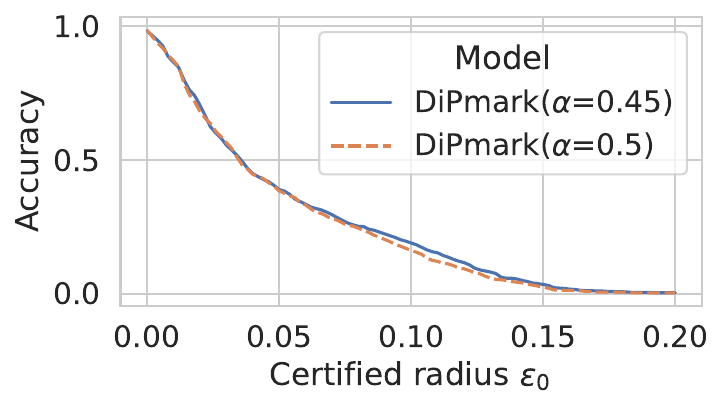}
    \caption{Certified radius $\epsilon_0$ of DiPmark with text modification with FPR smaller than 1\%. The x-axis refers the certified radius and the y-axis refers the percentage of watermarked sequences that are resilience under any text modification attacks with budget $\epsilon_0$.}
    \label{fig:certified}
\end{figure}
\subsection{{Resilience and provable resilience}}\label{sec:robust}
 We compare the resilience of the DiPmark ($\alpha=0.45$) with the SOTA watermark approaches \cite{kirchenbauer2023watermark,kuditipudi2023robust,hu2023unbiased}. In this context, we use the text generation task with 1,000 generated sequences on LLaMA-2. The texture key generation relies on the most recent one token, i.e., $a=1$. For resilience evaluation, we manipulate $\epsilon\in\{0.1, 0.2, 0.3\}$ portion of the text tokens through random text modifications and paraphrasing attacks. We also evaluate the provable resilience of the DiPmark under 1\% FPR, where we use the above mentioned 1,000 generated sequences on LLaMA-2 to calculate the certified radius (Theorem~\ref{thm:provable nrobust}).

 In Table~\ref{tab:AUCpoem}, we report the AUC score of different watermarks under varying attack strength $\epsilon$. The analysis underscores that, when $\epsilon$ remains below $0.3$, DiPmark demonstrates robust performance in effectively detecting watermarked sentences.
 In Figure~\ref{fig:certified}, we also show the certified radius of the watermarked sequences of DiPmark with FPR smaller than 1\% under the text modification.
\subsection{{Ablation study: watermark detectability}}\label{sec:exp detect}
We evaluate the detectability of our watermark on text generation task using LLaMA-2. We generate 1,000 examples for each tasks.
 We select $\alpha \in\{ 0.45, 0.5\}$ for DiPmark, and $\delta \in \{1.0, 1.5, 2.0\}$ and $\gamma=0.5$ for Soft watermark \citep{kirchenbauer2023watermark}. During detection, we use $\gamma=0.5$. We report the Type I (FPR) and II (FNR) errors.  We set the threshold $z=1.073/\sqrt{n}$ (FPR $p\leq0.1$) and $z=1.517/\sqrt{n}$ (FPR $p\leq0.01$). We also report the averaged green token ratio (\ref{def:green token ratio}) vs. text perplexity and token list separator $\gamma$ of DiPmark and Soft watermark. The averaged green token ratio quantifies the bias towards green tokens within the text sequence (see Section~\ref{sec:detection}). Notice, as the z-test in \citet{kirchenbauer2023watermark} is imprecise (see Section~\ref{sec:detection}), we use DiPmark detector for all models. 

\begin{figure}
    \centering
    \includegraphics[width=0.46\textwidth]{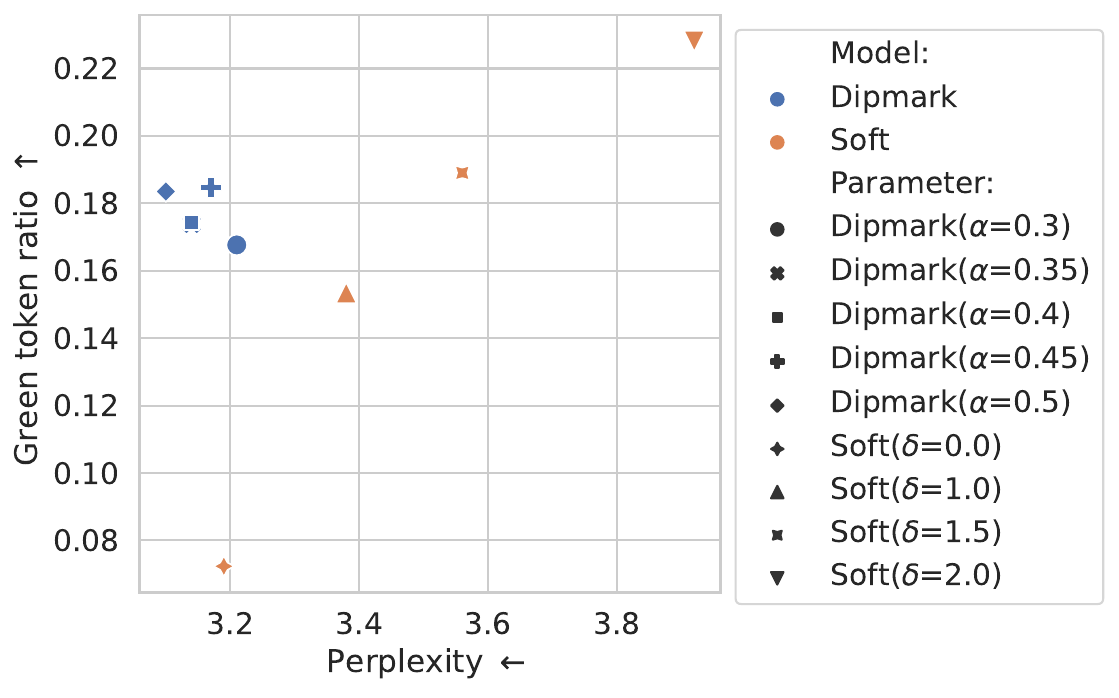}
    \includegraphics[width=0.46\textwidth]{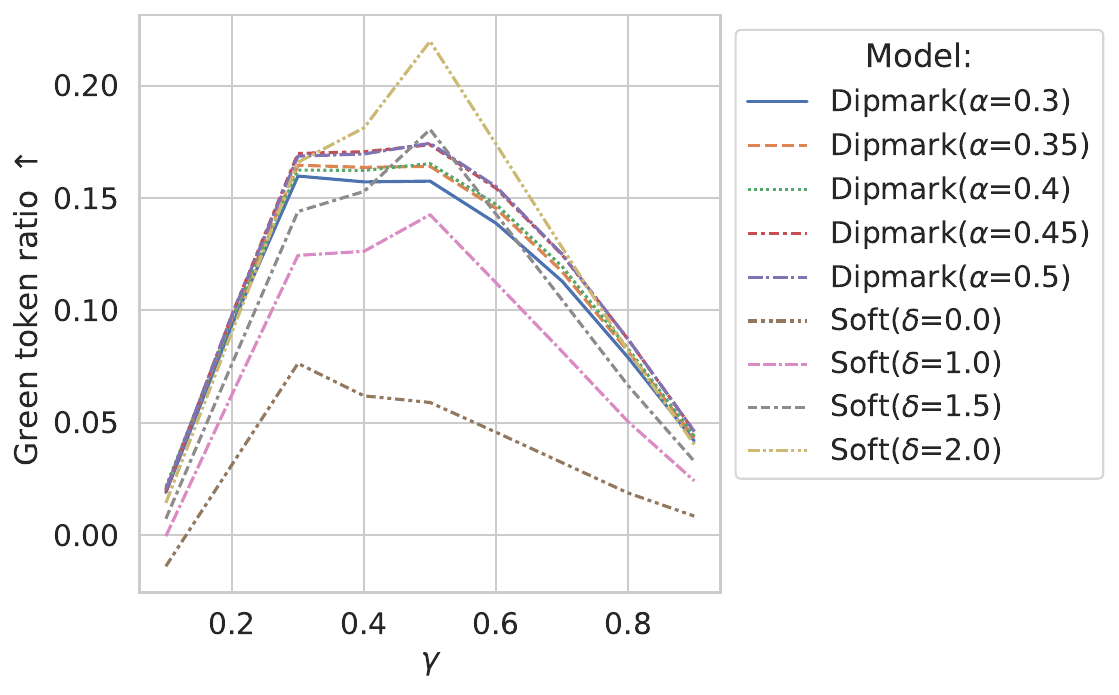}
    \caption{\textbf{Top:} Average perplexity vs green token ratio with $\gamma=0.5$ on text generation tasks. \textbf{Bottom:} Average green token ratio with different $\gamma$.}
    \label{fig:det1}
\end{figure}

\begin{table}[t]
\vspace{-0.15cm}
\caption{Empirical error rates for watermark detection on text generation. Each row is averaged over around 500 watermarked and 500 non-watermarked sequences of length $n = 260 \pm 5$. We select the threshold $z=1.073/\sqrt{n}$ (false positive rate $p\leq0.1$) and $z=1.517/\sqrt{n}$ (false positive rate $p\leq0.01$).}
\label{tab:detection}
    \centering
    \scalebox{0.77}{
\begin{tabular}{l|cccc|c}
\toprule
{}                            & \multicolumn{4}{c}{$z=1.073/\sqrt{n},p\leq0.1$}                                                   \\
\midrule
\textbf{}               & {FPR}$\downarrow$ & {TNR}$\uparrow$        & {TPR}$\uparrow$        & FNR$\downarrow$ & PPL$\downarrow$   \\
\midrule
\textbf{Soft ($\delta$=1.0)}     & 0.0545  & 0.9455 & 0.8919 & 0.2686 &3.38±0.06 \\
\textbf{Soft ($\delta$=1.5)}     & 0.0545  & 0.9455 & 0.9961 & 0.0796 &3.56±0.06\\
\textbf{Soft ($\delta$=2.0)}     & 0.0545  & 0.9455 & 1.0000 & 0.0000 &3.92±0.07\\
\midrule
\textbf{DiPmark ($\alpha$=0.45)} & 0.0545 & 0.9455 & 1.0000 & 0.0000 &3.14±0.06\\
\textbf{DiPmark ($\alpha$=0.5)}  & 0.0545 & 0.9455 & 1.0000 & 0.0000 & 3.17±0.05\\
\bottomrule
\end{tabular}}
\\
\vspace{0.15cm}
\centering
\scalebox{0.77}{
\begin{tabular}{l|cccc|c}
\toprule
{}     & \multicolumn{4}{c}{$z=1.517/\sqrt{n},p\leq0.01$}                                                                             \\
\midrule
\textbf{}    & {FPR}$\downarrow$     & {TNR}$\uparrow$     & {TPR}$\uparrow$        & FNR$\downarrow$ &PPL$\downarrow$   \\
\midrule
\textbf{Soft ($\delta$=1.0)}     & 0.0080 & 0.9920 & 0.8255 & 0.1745 &3.38±0.06\\
\textbf{Soft ($\delta$=1.5)}     & 0.0080 & 0.9920 & 0.9724 & 0.0276 & 3.56±0.06\\
\textbf{Soft ($\delta$=2.0)}     & 0.0080 & 0.9920 & 0.9981 & 0.0019 &3.92±0.07\\
\midrule
\textbf{DiPmark ($\alpha$=0.45)} & 0.0080 & 0.9920 & 0.9794 & 0.0206 &3.14±0.06\\
\textbf{DiPmark ($\alpha$=0.5)}  & 0.0080 & 0.9920 & 0.9827 & 0.0173 & 3.17±0.05 \\
\bottomrule
\end{tabular}
}
\vspace{-0.2cm}
\end{table}

The results for text generation are visually depicted in Figure~\ref{fig:det1}. In Figure~\ref{fig:det1} (top), it is evident that our DiPmark variants with $\alpha=0.45$ and $0.5$ yield green token ratios akin to those of the Soft watermark with $\delta=1.5$ without any discernible degradation in text quality. Figure~\ref{fig:det1} (bottom) delves into the impact of different green list separators $\gamma$, revealing that, for most watermark models, $\gamma=0.5$ yields the highest green token ratio, underscoring its suitability as a reasonable choice for watermark detection. The empirical error rates for watermark detection in text generation are reported in Table~\ref{tab:detection}, showcasing the commendable performance of DiPmark with low false positive rates while maintaining a high true positive rate. Broadly speaking, DiPmark with $\alpha = 0.45$ and $0.5$ exhibit performance comparable to that of the Soft watermark with $\delta = 1.5$ and $2.0$. For more experimental results regarding the detectability, please refer to Appendix \ref{sec:add2}.

\subsection{Case study: watermarking GPT-4 by DiPmark}
Recently, GPT-4 released the log-probability of the top-5 tokens during the generation process. This advancement enables us to modify and apply our DiPmark approach to GPT-4's framework. As we only know the probability of the top-5 tokens, we treat the probability of the rest tokens as 0. Given a prompt, we will first use GPT-4 to generate the top-5 log-probability of the next token. Then we adapt DiPmark to the log-probability and sampling the next token based on the reweighted distribution. Finally, we merge the generated token into the prompt, and repeat the above steps. In our experiments, we use \texttt{gpt-4-0613} on 100 different fiction writing prompts and restrict the number of generated token to 200. We set $\alpha=0.45$ in our DiPmark model.
\begin{figure}[h]
    \centering
    \includegraphics[width=0.48\textwidth]{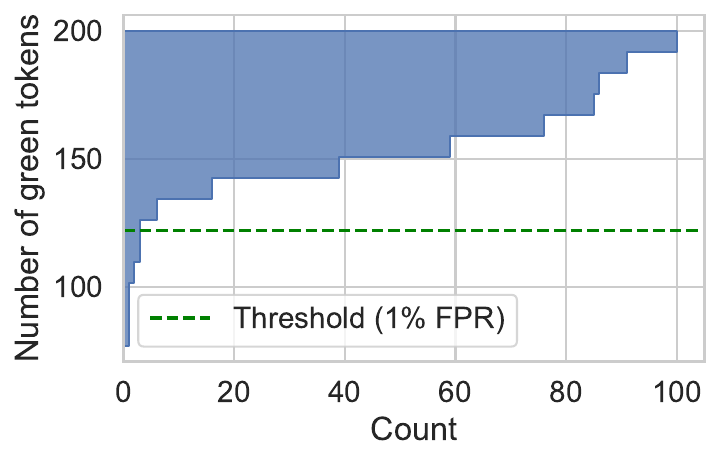}
    \vspace{-12pt}
    \caption{Cumulative histogram of the number of green tokens in the 100 watermarked gpt-4 generated sequences. The green line represents the threshold with FPR smaller than 1\%.}
    \label{fig:gpt4}
    \vspace{-12pt}
\end{figure}

In Figure~\ref{fig:gpt4}, we show the cumulative histogram of the number of green tokens in the 100 watermarked GPT-4 generated sequences. As all generated sequences have 200 tokens, any sequence with greater than 122 green tokens can be detected as watermarked content with FPR less than 1\%. From the plot, we see 97 out of 100 generated sequences can be detected by our algorithm, which validate the applicablity of our watermark on the industry-level LLMs.

\section{Conclusion}
In summary, we present DiPmark, a novel watermarking solution tailored for LLMs. DiPmark exhibits the crucial attributes of distribution-preserving, accessibility, and resilience, which we rigorously substantiate through a combination of theoretical analysis and empirical investigations. 
Our work not only strengthens the theoretical foundations, but also imparts practical insights that are valuable for the industrial deployment of LLM watermarking technologies.

\section*{Impact Statement}
Machine learning holds significant potential to enhance human life, however, its malicious applications could substantially jeopardize safety \cite{wu2022retrievalguard,hong2024improving,pmlr-v202-hu23g,wang2023defending,wang2023distributionally,wu2023law,chen2024your}.
This research focuses on advancing watermark techniques to effectively identify AI-generated sentences. In an era where AI's role in content creation is expanding rapidly, our work gains significance in preserving the authenticity and integrity of digital text. This innovation is pivotal in distinguishing human-authored content from that produced by AI, a distinction that holds substantial value across various societal and technological domains, e.g., enhancing digital content authenticity, combating misinformation, and empowering content creators.

\section*{Acknowledgement}

This work was partially supported by NSF IIS 2347592, 2347604, 2348159, 2348169, DBI 2405416, CCF 2348306, CNS 2347617; HY Zhang was supported by NSERC Discovery Grant RGPIN-2022-03215, DGECR-2022-00357.


\bibliography{example_paper.bib}
\bibliographystyle{icml2024}

\newpage
\appendix
\onecolumn
\section{Future Work}

Future endeavors should focus on enhancing the detectability of distribution-preserving watermarks. This could be realized by assigning greater weight to the green-list tokens during the watermarking process. Additionally, a promising avenue for exploration involves the design of a more robust distribution-preserving watermark, potentially through the integration of multiple detectors. These directions represent promising opportunities for advancing the efficacy and applicability of watermarking techniques on large language models.

\section{Related Work}\label{app:related work}
\textbf{Reweight-based watermarking framework.} 
In a recent seminal work, \cite{kirchenbauer2023watermark} introduced a pioneering watermarking scheme tailored for LLMs, backed by formal guarantees. Their work demonstrated that watermark embedding could be accomplished by altering the token distribution during generation, targeting outputs with substantial entropy. However, this approach inevitably leads to a pivotal change in the distribution of the generated text, potentially compromising the quality of the generated content.

To maintain an unaltered output distribution in watermarked content, alternative strategies have been explored. \cite{christ2023undetectable} and \cite{kuditipudi2023robust} employed the inverse sampling method to generate watermarked token distributions. Notably, \cite{christ2023undetectable}'s method faces resilience issues under modifications and lacks empirical validation for detectability. Meanwhile, \cite{kuditipudi2023robust}'s approach necessitates the secret key distribution during detection, potentially compromising data security and watermark stealthiness. Moreover, their detection process involves hundreds of resampling steps from the secret key distribution, which is inefficient for lengthy texts. \cite{hu2023unbiased} used inverse sampling and permutation based reweight methods for watermarking, but the detector requires access of the language model API, undermining its operational efficiency. Aaronson's ongoing watermarking project \citep{Aaronson2022} employs n-gram hashing for reweighting the next-token distribution, though specific details are currently unavailable.

The landscape also includes several schemes \citep{abdelnabi2021adversarial,qiang2023natural,yoo2023robust,munyer2023deeptextmark} that incorporate an ML model within the watermarking algorithm itself. However, these constructions lack formal assurances and rely on heuristic arguments for satisfying the criteria of Stealthiness, Efficiency, and Resilience.

Our research aligns closely with the findings presented in \cite{kirchenbauer2023watermark}. In their methodology, they employed watermarking for text derived from a language model by bifurcating the token set into designated `red' and `green' lists. The division is determined by a random seed that is contingent on the secret key coupled with a hash of priorly generated tokens. The authors accentuated the prominence of green tokens during the sampling phase by reweighting the token log-probabilities. Building on this foundation, our research retains the red-green list configuration, but introduces an evolved family of permutation-based reweight strategies. This dual approach ensures: 1) a promoted utilization of green tokens, and 2) equivalency in distribution between a sample from the watermarked language model and one from the original language model. 



\textbf{Post-hoc detectors.}
Post-hoc detection stands as a notable alternative to watermarking, focusing on the retrospective analysis of machine-generated text. This could be achieved through leveraging features inherent to language models or by refining pre-existing, expansive language models to function as detectors, as elaborated by \citep{zellers2019defending}. Notably, specific implementation nuances, such as sampling methodologies, can be discerned through reverse engineering the generated text, a process detailed by \citep{tay2020reverse}.  There are also post-hoc detectors designed for the modern large language models \citep{mitchell2023detectgpt,tian2023gptzero,kirchner2023new}, which are models specifically trained for the binary detection task. 
However, there is a growing sentiment that those detection methodologies are diminishing in efficacy in tandem with the evolution of language model capabilities. As \cite{gambini2022pushing} observed, detection mechanisms that were adept with GPT-2 have encountered challenges with GPT-3. Besides, the text rephrasing model in \citep{krishna2023paraphrasing} bypassing prevalent post-hoc detectors like GPTZero \citep{tian2023gptzero}, DetectGPT \citep{mitchell2023detectgpt}, and OpenAI's proprietary detector \citep{kirchner2023new}. Additionally, a pertinent observation made by \cite{chakraborty2023possibilities} suggests that as AI-generated content becomes increasingly indistinguishable from human-produced text, the demands on post-hoc detectors to analyze more extended text segments will escalate.

\textbf{Steganography.}
Steganography involves embedding concealed messages in channels such as natural language or images, ensuring only intended recipients can discern the message while others remain unaware \citep{hopper2002provably}. When applied to watermarking, the aim is stealthy. Yet, known steganography techniques might not achieve this without certain entropy-related assumptions. In scenarios where language model prompts can be chosen adversarially, the need for stealthy persists. This discrepancy arises due to differences in access levels that watermarking and steganography have to the model's output distribution. In steganography, there's only oracle access to this distribution. Conversely, our watermarking approach gets a detailed view of the token's probability distribution. Hence, while steganography either relies on entropy assumptions \citep{hopper2002provably} or compromises security with low entropy channels \citep{dedic2009upper}, our watermark remains stealthy irrespective of the text's entropy. This is achieved by leveraging the full distribution access and using it as a foundation for embedding watermarks. \cite{kaptchuk2021meteor} offers encoding similar access. However, it presupposes equal decoding access, which is impractical for watermarking as the detection algorithm won't typically have the initiating prompt, thus remaining ignorant of the distribution.
\section{Missing Proofs}\label{sec:missing proof}
\subsection{Proof of Theorem \ref{thm:dipreweight}}
\begin{proof} We need to show $\forall t\in V,\bbE_{\theta}[P_W(t|\x,\theta)] =P_M(t|\x)$. Recall $\theta$ is uniformly distributed on $\Theta$, we have
  \begin{equation}
    \begin{split}
        \bbE_{\theta\sim P_\Theta}[P_W(t|\x,\theta)] &= \sum_{V^p\in\Theta}\bbE_{\theta\sim P_\Theta}[P_W(t|\x,V^p)\bm{1}_{\theta=V^p}]\\
        &= \sum_{V^p\in\Theta}[P_W(t|\x,V^p)]\bbE_{\theta\sim P_\Theta}[\bm{1}_{\theta=V^p}]\\
        &=\frac{1}{N!}\sum_{V^p\in\Theta}P_W(t|\x,V^p).
    \end{split}
\end{equation}
 Given an token $t$ and a permutation of the token list $V^p$, denote by $E_{V^p}(t)$ the position of $t$ in the ordered token set $V^p$.
Let $V^{p^r}$ be the reversed permutation of $V^p$, notice $t$ is the $(N+1-E_{V^p}(t))$-th element in $V^{p^r}$. Given an arbitrary permutation pair $(V^p,V^{p^r})$, $V^p:=\{t_1,...,t_N\}$. We will show
$$P_W(t|\x,V^{p})+P_W(t|\x,V^{p^r})=2P_M(t|\x).$$
For the ease of notation we denote by $i=E_{V^p}(t)$, we have $t_i=t$. From the definition of DiP-reweight we know $P_W(t|\x,V^{p}) = F(E_{V^p}(t)|V^p)-F(E_{V^p}(t)-1|V^p) = F(i|V^p)-F(i-1|V^p)$, where 

\begin{equation}
    F(i|V^p):=\max\left\{\sum_{j=1}^{i} P_M(t_j|\x)-\alpha,0\right\}
    +\max\left\{\sum_{j=1}^{i} P_M(t_j|\x)-(1-\alpha),0\right\},\ i\in[1,N],
\end{equation}
So we need to show
$$F(i|V^p)-F(i-1|V^p)+F(N+1-i|V^{p^r})-F(N-i|V^{p^r})=2P_M(t|\x).$$

As $\sum_{j=1}^{N}P_M(t_j|\x)=1$, we have
\begin{equation}
    \begin{split}
        F(N+1-i|V^{p^r}) &= \max\left\{\sum_{j=1}^{N+1-i} P_M(t_{N+1-j}|\x)-\alpha,0\right\}
    +\max\left\{\sum_{j=1}^{N+1-i} P_M(t_{N+1-j}|\x)-(1-\alpha),0\right\}\\
    &= \max\left\{\sum_{j=i}^{N} P_M(t_{j}|\x)-\alpha,0\right\}
    +\max\left\{\sum_{j=i}^{N} P_M(t_{j}|\x)-(1-\alpha),0\right\}\\
    &= \max\left\{(1-\alpha)-\sum_{j=i}^{i-1} P_M(t_{j}|\x),0\right\}
    +\max\left\{\alpha-\sum_{j=1}^{i-1} P_M(t_{j}|\x),0\right\},
    \end{split}
\end{equation}
and 
\begin{equation}
    \begin{split}
        F(i-1|V^{p}) &= \max\left\{\sum_{j=1}^{i-1} P_M(t_{j}|\x)-\alpha,0\right\}
    +\max\left\{\sum_{j=1}^{i-1} P_M(t_{j}|\x)-(1-\alpha),0\right\}.
    \end{split}
\end{equation}
By $(\max\{A,0\}-\max\{-A,0\})=A,\forall A\in\bbR$, we have 
\begin{equation}
    \begin{split}
        F(N+1-i|V^{p^r})-F(i-1|V^{p}) &= (1-\alpha)-\sum_{j=i}^{i-1} P_M(t_{j}|\x) + \alpha-\sum_{j=1}^{i-1} P_M(t_{j}|\x)\\
        &=1-2\sum_{j=i}^{i-1} P_M(t_{j}|\x).
    \end{split}
\end{equation}
Analogously, we have 
\begin{equation}
    \begin{split}
        F(N-i|V^{p^r})-F(i|V^{p})= 1-2\sum_{j=i}^{i} P_M(t_{j}|\x).
    \end{split}
\end{equation}
Thus, 
\begin{equation}
    \begin{split}
    P_W(t|\x,V^{p})+P_W(t|\x,V^{p^r})=&F(i|V^p)-F(i-1|V^p)+F(N+1-i|V^{p^r})-F(N-i|V^{p^r})\\
    =&(1-2\sum_{j=i}^{i-1} P_M(t_{j}|\x))-(1-2\sum_{j=i}^{i} P_M(t_{j}|\x))\\
    =&2P_M(t_i|\x) = 2P_M(t|\x).
    \end{split}
\end{equation}

By the symmetric of permutation we have
\begin{equation}
    \begin{split}
    2\bbE_{\theta\sim\Theta}[P_W(t|\x,\theta)] =& \frac{1}{ N!}\sum_{V^p\in\Sigma}P_W(t|\x,V^{p})\\
    =&\frac{1}{ N!}\sum_{V^p\in\Sigma}[P_W(t|\x,V^{p})+P_W(t|\x,V^{p^r})]\\
    =&\frac{1}{ N!}\sum_{V^p\in\Sigma}2P_M(t|\x)\\
    =&2P_M(t|\x).
        \end{split}
\end{equation}
Therefore, $\bbE_{\theta\sim\Theta}[P_W(t|\x,\theta)] = P_M(t|\x)$, which concludes the proof.
\end{proof}
\subsection{Proof of Theorem~\ref{thm:concenbound}}
\begin{proof}
    As $L_G(\gamma) = \sum_{i=1}^n B_i(\gamma)$, where $B_i(\gamma)\sim\textrm{Bernoulli}(1-\gamma)$. By Markov's inequality we have $\forall h>0$,
    $$\Pr(L_G(\gamma)-(1-\gamma)n\geq nt)\leq \frac{\bbE[e^{h(L_G(\gamma)-(1-\gamma)n)}]}{e^{hnt}},$$
    as $B_i$ is independent from each other, we have
    $$\frac{\bbE[e^{h(L_G(\gamma)-(1-\gamma)n)}]}{e^{hnt}} = \prod_{i=1}^n\frac{\bbE[e^{h(B_i-(1-\gamma))}]}{e^{ht}}.$$
    Since $B_i$ follows Bernoulli distribution, we have 
    $$\bbE[e^{h(B_i-(1-\gamma))}]/e^{ht} = (1-\gamma)e^{h(\gamma-t)}+\gamma e^{-h(1-\gamma+t)}.$$
    Thus \begin{equation}\label{eq:xdd}
        \Pr(L_G(\gamma)-(1-\gamma)n\geq nt)\leq [(1-\gamma)e^{h(\gamma-t)}+\gamma e^{-h(1-\gamma+t)}]^n
    \end{equation}
    holds for arbitrary $h>0$.
    Denote by $m(h) = (1-\gamma)e^{h(\gamma-t)}+\gamma e^{-h(1-\gamma+t)}$, taking derivative w.r.t. $h$ yields
    $$\frac{dm(h)}{dh} = (1-\gamma)(\gamma-t)e^{h(\gamma-t)} + \gamma(1-\gamma+t) e^{-h(1-\gamma+t)}.$$
    Let $\frac{dm(h)}{dh} = 0$, we have $h = \ln\frac{\gamma(1-\gamma+t)}{(1-\gamma)(\gamma-t)}.$ Combining it with \autoref{eq:xdd} yields
    \begin{equation}
        \begin{split}
            \Pr(L_G(\gamma)-(1-\gamma)n\geq nt)&\leq \inf_{h>0} [(1-\gamma)e^{h(\gamma-t)}+\gamma e^{-h(1-\gamma+t)}]^n\\
            &\leq [e^{(\gamma-t)\ln\frac{\gamma(1-\gamma+t)}{(1-\gamma)(\gamma-t)}}(1-\gamma+\gamma e^{-\ln\frac{\gamma(1-\gamma+t)}{(1-\gamma)(\gamma-t)}})]^n\\
            & = [e^{(\gamma-t)\ln\frac{\gamma(1-\gamma+t)}{(1-\gamma)(\gamma-t)}}\frac{1-\gamma}{1-\gamma+t}]^n\\
            & = [e^{(\gamma-t)\ln\frac{\gamma(1-\gamma+t)}{(1-\gamma)(\gamma-t)}+\ln\frac{1-\gamma}{1-\gamma+t}}]^n\\
            & = e^{-n((1-\gamma+t)\ln\frac{1-\gamma+t}{1-\gamma}+(\gamma-t)\ln\frac{\gamma-t}{\gamma})}\\
            & = e^{-n\mathbb{KL}(1-\gamma+t||1-\gamma)}
        \end{split}
    \end{equation}

\end{proof}
\subsection{Proof of Theorem~\ref{thm:provable nrobust} and discussion}
\begin{proof}
    Notice based on above discussion, the worst-case decrease on $L_G(\gamma)$ per token modification is $a+1$. If we are allowed to perturbed $\epsilon$ portion of the text, the worst-case decrease on $L_G(\gamma)$ will be $(a+1)\epsilon n$. Denoted by $\x_{1:n'}$ the perturbed text. Assume we can still correctly detect the watermarked sequence, which means 
    $$(L_G(\gamma)-(a+1)\epsilon n)/n'-(1-\gamma)\geq z.$$
    Notice, the left hand side of the above equation is decreasing with $n'$, as we perturbed $\epsilon$ portion of the text, the maximum of the possible $n'$ is $n' = (1+\epsilon)n$, i.e., all modifications are text insertion. In this case, we need to solve 
    $$\frac{L_G(\gamma)-(a+1)\epsilon n}{(1+\epsilon)n}-(1-\gamma)\geq z.$$
    we have 
    $$\epsilon\leq \frac{L_G(\gamma)-(1-\gamma)n-zn}{(2+a-\gamma+z)n}.$$
    
    Therefore, for any text modification with budget $\epsilon\leq \frac{L_G(\gamma)-(1-\gamma)n-zn}{(2+a-\gamma+z)n}$, our algorithm can still detect the watermarked sequence.
 \end{proof}   
    In the following theorem, we provide a more simple certified radius assuming the text length is not changed by perturbations.

\begin{theorem}\label{thm:provable nrobust1}
    Assuming the sequence length $n$ is not changed through text modifications. Given $\Phi(\gamma,\x_{1:n}):=L_G(\gamma)/n-(1-\gamma)$ and a threshold $z$, the certified radius of the watermarked sequence $\x_{1:n}$ is 
    $\epsilon_0 = \frac{\Phi(\gamma,\x_{1:n}) -z}{a+1}$.
\end{theorem}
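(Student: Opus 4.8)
\textbf{Proof proposal for Theorem~\ref{thm:provable nrobust1}.}

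The plan is to mirror the argument for Theorem~\ref{thm:provable nrobust}, but with the simplification that the sequence length stays fixed at $n$, so I never need to track how text insertions inflate the denominator. First I would recall the worst-case perturbation analysis already established in the excerpt: modifying a single token $x_i$ decreases $L_G(\gamma)$ by at most $a+1$, since $x_i$ itself may flip from green to red (a loss of at most $1$) and the $a$ downstream tokens $x_{i+1},\dots,x_{i+a}$, whose red-green partition is seeded by hashing $x_i$, may each flip to red (a loss of at most $a$). Hence if an $\epsilon$ fraction of the $n$ tokens is altered, the total decrease in $L_G(\gamma)$ is at most $(a+1)\epsilon n$.

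Next I would write down the detection condition on the perturbed sequence. Since the length is unchanged, the perturbed green-token count $L_G'(\gamma)$ satisfies $L_G'(\gamma) \geq L_G(\gamma) - (a+1)\epsilon n$, and the watermark is still detected as long as
\begin{equation*}
\frac{L_G(\gamma) - (a+1)\epsilon n}{n} - (1-\gamma) \geq z.
\end{equation*}
Rearranging, this is equivalent to $\Phi(\gamma,\x_{1:n}) - (a+1)\epsilon \geq z$, i.e. $\epsilon \leq \frac{\Phi(\gamma,\x_{1:n}) - z}{a+1}$. Therefore any modification budget $\epsilon$ below this bound leaves the sequence detectable, which is exactly the claimed certified radius $\epsilon_0 = \frac{\Phi(\gamma,\x_{1:n}) - z}{a+1}$.

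There is no real obstacle here; the content is entirely contained in the worst-case decrease bound $a+1$, which is taken as given from the discussion preceding Definition~\ref{thm:provable nrobust}. The only point worth stating carefully is that because $n$ is assumed fixed, the monotonicity-in-$n'$ complication from the proof of Theorem~\ref{thm:provable nrobust} disappears, so the threshold inequality becomes linear in $\epsilon$ and solves in one line. I would also note in passing that this bound is smaller (more conservative) than the one in Theorem~\ref{thm:provable nrobust} only when $z + 1 - \gamma > 0$, consistent with the intuition that allowing insertions both removes green tokens and dilutes the ratio.
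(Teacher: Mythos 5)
Your proof is correct and follows essentially the same route as the paper's: take the worst-case per-token decrease of $a+1$ in $L_G(\gamma)$ as given from the earlier perturbation analysis, bound the total decrease by $(a+1)\epsilon n$, and solve the detection inequality, which is linear in $\epsilon$ because $n$ is fixed. Your version is in fact the more consistent one: the paper's own proof writes the detection condition with a $\sqrt{n}$ normalization, $(L_G(\gamma)-(1-\gamma)n-(a+1)\epsilon n)/\sqrt{n}\geq z$, and concludes $\epsilon\leq (\Phi(\gamma,\x_{1:n})-z)/((a+1)\sqrt{n})$, which does not literally match the stated radius, whereas your use of $L_G(\gamma)/n-(1-\gamma)\geq z$ yields exactly $\epsilon_0=(\Phi(\gamma,\x_{1:n})-z)/(a+1)$. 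One small slip in your closing aside: the fixed-length radius is \emph{larger} than that of Theorem~\ref{thm:provable nrobust} precisely when $z+1-\gamma>0$ (which always holds for $z>0$ and $\gamma\le 1$), not smaller; forbidding insertions can only enlarge the certified radius, which is what your own stated intuition predicts.
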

\begin{proof}
    Notice based on above discussion, the worst-case decrease on $L_G(\gamma)$ per token modification is $a+1$. If we are allowed to perturbed $\epsilon$ portion of the text, the worst-case decrease on $L_G(\gamma)$ will be $(a+1)\epsilon n$. Assume we can still correctly detect the watermarked sequence, which means 
    $$(L_G(\gamma)-(1-\gamma)n-(a+1)\epsilon n)/\sqrt{n}\geq z,$$
    we have 
    $\epsilon\leq \frac{\Phi(\gamma,\x_{1:n}) -z}{(a+1)\sqrt{n}}.$
    Therefore, for any text modification with budget $\epsilon\leq \frac{\Phi(\gamma,\x_{1:n}) -z}{(a+1)\sqrt{n}}$, our algorithm can still detect the watermarked sequence.

\end{proof}

\section{Comparison of the test statistic}\label{sec:comp test statistic}
In this section, we provide a detailed comparison of our test statistic and the z-test statistic proposed in \cite{kirchenbauer2023watermark}. In Figure~\ref{fig:test stat comp1}, we show number of green tokens vs p-value (false positive rate), where we set the number of tokens $n=200$, green list separator $\gamma=0.5$. We see that given the same number of green tokens, the z-test statistic always leads to lower p-value than DiPmark test statistic. Given the fact that the z-test statistic is only an approximation of the green token distribution, we conclude that this approximation is not proper for watermark detection, as it will wrongly classify the sentences not generated by LMs as being LM-produced. In Table~\ref{tab:test stat comp2}, we show the detecting result based on DiPmark detector and the detector in \citet{kirchenbauer2023watermark} on 500 non-watermarked sentences with length 260. We can see clearly the empirical FPR of z-test is continuously greater than its theoretical guarantee, which indicates z-test statistic may not be suitable for watermark detection.
\begin{figure}
    \centering
    \includegraphics[width=0.22\textwidth]{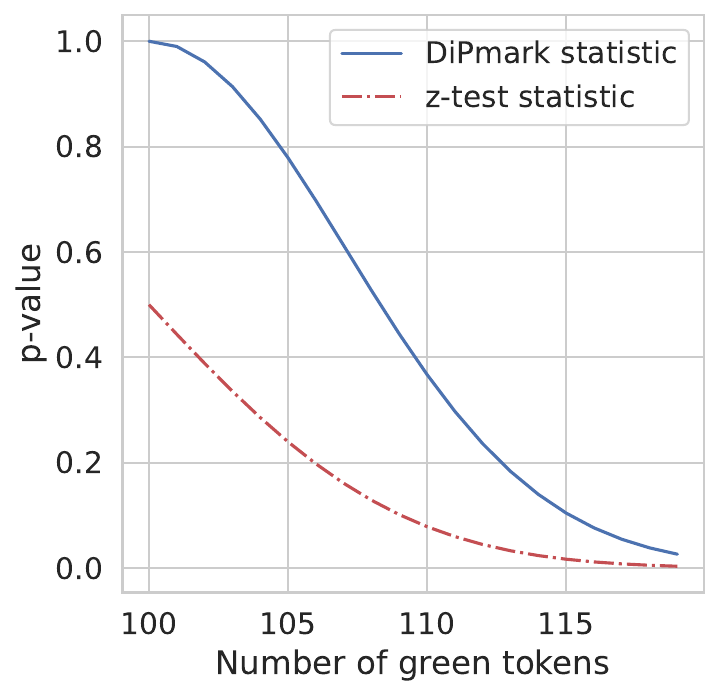}
    \includegraphics[width=0.24\textwidth]{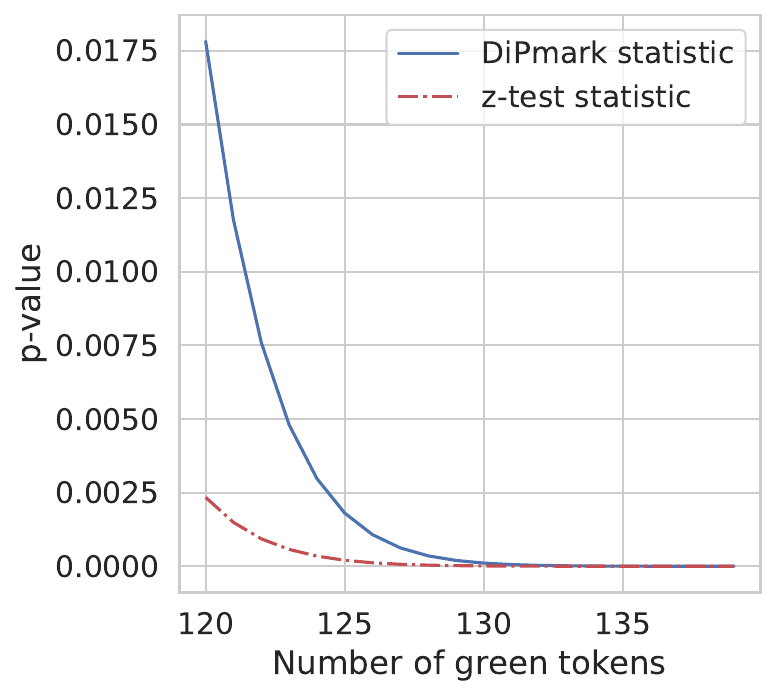}
    \includegraphics[width=0.22\textwidth]{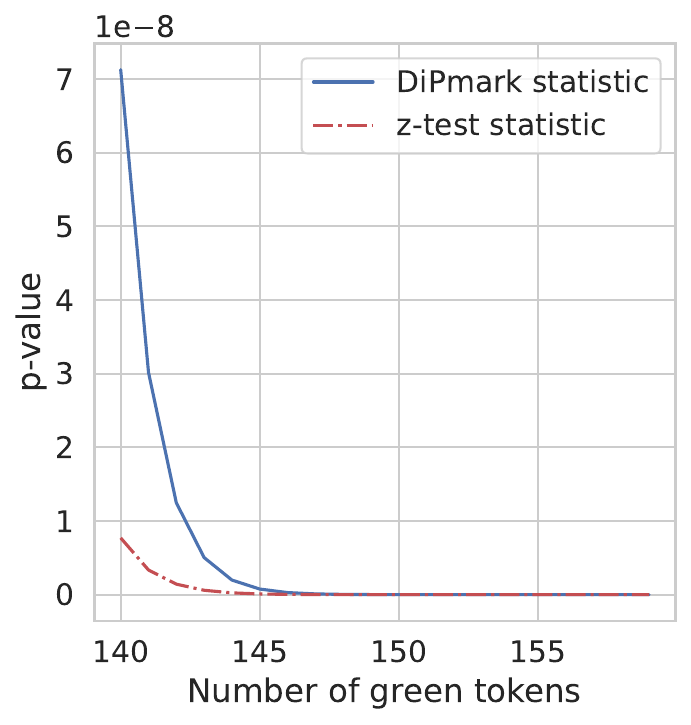}
    \includegraphics[width=0.24\textwidth]{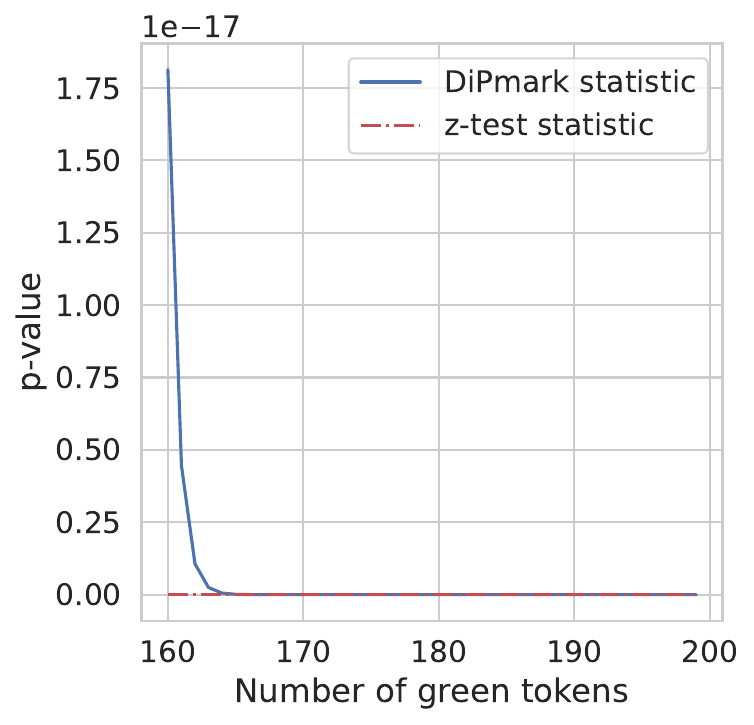}
    \vspace{-0.5cm}
    \caption{Number of green tokens vs p-value (false positive rate), where we set the number of tokens $n=200$, green list separator $\gamma=0.5$. We see that given the same number of green tokens, the z-test always has lower p-value than DiPmark test statistic. Given the fact that the z-test statistic is only an approximation of the green token distribution, we conclude that this approximation is not proper for watermark detection, as it will wrongly classify the sentences not generated by LMs as being LM-produced.}
    \label{fig:test stat comp1}
    \vspace{-0.3cm}
\end{figure}

\begin{table}[]
\centering
\caption{Comparison of test statistics: Theoretical FPR vs Empirical FPR. We can see clearly the empirical FPR of z-test is continuously greater than its theoretical guarantee, which indicates z-test statistic may not be suitable for watermark detection.}
\label{tab:test stat comp2}
\begin{tabular}{lccc}
\toprule
                  & $p<0.10$ (10\%FPR) & $p<0.05$ (5\%FPR) & $p<0.01$ (1\%FPR) \\ \midrule
z-test \cite{kirchenbauer2023watermark}  & 56/500 (11.2\% FPR)                    & 34/500 (6.8\% FPR)                   & 12/500 (2.4\% FPR)                   \\
DiPmark statistic & 13/500 (2.6\% FPR)                    & 10/500  (2\% FPR)                  & 4/500  (0.5\% FPR)         \\ \bottomrule
\end{tabular}
\end{table}

\section{Detailed Experiment Setup}\label{sec:detailed_experiment_setup}

We assess the performance of DiPmark across three critical applications of seq2seq models: text summarization, machine translation, and text generation. The experiments are implemented using the Huggingface library \citep{wolf2019huggingface}, a widely adopted platform for model development and sharing within the NLP community. All experiments are conducted on three Nvidia A6000 GPUs with 48GB of memory. Detecting 1,000 watermarked sentences generated from LLaMA-2 requires only 90 seconds.

\textbf{Machine Translation.} For the machine translation task, we utilize the WMT'14 English (En) to Romanian (Ro) dataset, comprising 1,999 examples in the test set. We employ the Multilingual Bart (MBart) model \citep{liu2020multilingual} along with its official tokenizer.

\textbf{Text Summarization.} In the text summarization task, we use the test set from the CNN-DM corpus \citep{hermann2015teaching}, consisting of 11,490 examples. Our model of choice is BART-large, which encompasses 400 million parameters, and LLaMA-2 with 7 billion parameters.

\textbf{Text Generation.} For text generation, we incorporate the test set from the CNN-DM corpus as part of the generation prompt. We use LLaMA-2 which has 7 billion parameters.


\textbf{Watermark Setup.} Our experiments primarily compare DiPmark with the Soft watermark introduced by \citep{kirchenbauer2023watermark}. In the case of DiPmark, we consider various values of $\alpha$ from the set $\{0.3, 0.35, 0.4, 0.45, 0.5\}$. For the Soft watermark \citep{kirchenbauer2023watermark}, we explore green list bias $\delta$ values from $\{0.0, 1.0, 1.5, 2.0\}$ with a fixed green list separator $\gamma=0.5$.
Texture key generation relies on the most recent five tokens as texture key. For instance, when generating $x_4$ in response to $(x_1, x_2, x_3)$ as the current input to the decoder, the texture key includes $(x_1, x_2, x_3)$, considering the availability of only three tokens. The texture key history resets before generating each batch. To generate the cipher, we employ SHA-256 as the hash function and a set of 1024-bit random bitstrings as the key set $K$. The cipher $\theta$ is sampled from $\Theta$ using $\text{hash}(k,\s)$ as the random seed. We compare DiPmark with ITS \cite{kuditipudi2023robust} and $\delta$-watermark \cite{hu2023unbiased}, where we follow the setting in their open sourced code\footnote{\url{https://github.com/jthickstun/watermark}}\footnote{\url{https://github.com/xiaoniu-578fa6bff964d005/UnbiasedWatermark}}.

\textbf{Evaluation metrics for text quality.} In this part, we introduce the evaluation metrics we used for evaluating the text quality (Section.~\ref{sec:dip perf exp}). 
\begin{itemize}
    \item \textbf{ROUGE score.} For the summarization task, we utilize the ROUGE score \citep{lin2004rouge}, which measures n-gram overlap to assess the summary's effectiveness in capturing essential content from reference summaries.
    \item \textbf{BLEU score.} For the machine translation task, we rely on the BLEU score \citep{papineni2002bleu}, emphasizing the lexical similarity between machine-generated translations and human reference translations.
    \item \textbf{BERTScore.} BERTScore \cite{zhang2019bertscore} computes the similarity of two sentences as a sum of cosine similarities between their tokens' embeddings. We use BERTScore-F1, BERTScore-Precision, and BERTScore-Recall for evaluating both text summarization and machine translation tasks.
    
    \item \textbf{Perplexity.} In information theory, perplexity is a measurement of how well a probability distribution or probability model predicts a sample. It may be used to compare probability models. A low perplexity indicates the probability distribution is good at predicting the sample. We use perplexity for evaluating both text summarization and machine translation tasks.
\end{itemize}
\textbf{Evaluation metrics for detectability of watermarks.} In this part, we introduce the evaluation metrics we used for evaluating the detectability of watermarks (Sections~\ref{sec:exp detect} and \ref{sec:robust}). 
\begin{itemize}
    \item \textbf{Green token ratio.} Denoted by $L_G(\gamma)$ the number of green tokens in a text sequence with green list separator $\gamma$. The green token ratio is given by $L_G(\gamma)/n-(1-\gamma)$. This ratio quantifies the bias towards green tokens within the text sequence (see Section~\ref{sec:detection}).
    \item \textbf{z-score.} The z-score of a text sequence $\x_{1:n}$ is $(L_G(\gamma)-(1-\gamma)n)/\sqrt{n}$. A higher z-score will reduce the false positive rate, where a non-watermarked sequence is detected as watermarked (see Section~\ref{sec:detection}). 
    \item \textbf{Type I and II errors.} We generally use true positive rate (TPR), false positive rate (FPR), true negative rate (TNR), and false negative rate (FNR) to evaluate the performance of watermarks on a mixture of watermarked and non-watermarked sentence. FPR measures the Type I error of the hypothesis testing, in which the null hypothesis got rejected when it is actually true. FNR measures the type II error, in which one fails to reject a null hypothesis that is actually false.
\end{itemize}

\section{Additional Experiments}
\begin{table}[]
\centering
\caption{Performance of Machine Translation.}
\label{tab:machineadd}
\begin{tabular}{l|cccc}
\toprule
\textbf{}                                                        & BERT-F1 & BERT-Precision & BERT-Recall & BLEU \\
\midrule
No Watermark                                                 & 0.559±0.003           & 0.545±0.004                  & 0.574±0.003               & 21.8±0.3      \\
\midrule
DiPmark($\alpha$=0.3)  & 0.561±0.003           & 0.547±0.004                  & 0.575±0.003               & 22.0±0.3      \\
DiPmark($\alpha$=0.35) & 0.562±0.003           & 0.548±0.004                  & 0.575±0.003               & 22.1±0.3      \\
DiPmark($\alpha$=0.4)  & 0.561±0.003           & 0.547±0.004                  & 0.576±0.003               & 21.9±0.3      \\
DiPmark($\alpha$=0.45) & 0.562±0.003           & 0.548±0.004                  & 0.576±0.003               & 21.9±0.3      \\
DiPmark($\alpha$=0.5)  & 0.562±0.003           & 0.548±0.004                  & 0.576±0.003               & 21.8±0.3      \\ \midrule
Soft($\delta$=0.0)     & 0.560±0.003           & 0.545±0.004                  & 0.574±0.003               & 21.8±0.3      \\
Soft($\delta$=1.0)    & 0.557±0.003           & 0.543±0.004                  & 0.572±0.003               & 21.2±0.3      \\
Soft($\delta$=1.5)     & 0.550±0.003           & 0.534±0.004                  & 0.565±0.003               & 20.4±0.3      \\
Soft($\delta$=2.0)     & 0.539±0.003           & 0.523±0.004                  & 0.555±0.003               & 19.4±0.3     \\
\bottomrule
\end{tabular}
\end{table}
\begin{table}[]
\caption{Performance of Text Summarization.}
\centering
\label{tab:summaradd}
\scalebox{0.675}{
\begin{tabular}{l|ccccccc}
\toprule
\textbf{}          & BERT-F1 & BERT-Precision & BERT-Recall & Perplexity & Rouge-1 & Rouge-2 & Rouge-L \\
\midrule
No Watermark            & 0.3273±0.0008         & 0.3181±0.0009                & 0.3366±0.0010             & 5.021±0.018  & 0.3855±0.0009   & 0.1387±0.0008   & 0.2444±0.0008   \\ \midrule
DiPmark($\alpha$=0.3)         & 0.3279±0.0008         & 0.3187±0.0009                & 0.3372±0.0010             & 5.014±0.018  & 0.3861±0.0009   & 0.1390±0.0008   & 0.2450±0.0008   \\
DiPmark($\alpha$=0.35)        & 0.3274±0.0008         & 0.3183±0.0009                & 0.3367±0.0010             & 4.998±0.018  & 0.3856±0.0009   & 0.1389±0.0008   & 0.2449±0.0008   \\
DiPmark($\alpha$=0.4)         & 0.3277±0.0008         & 0.3187±0.0009                & 0.3370±0.0010             & 5.001±0.018  & 0.3862±0.0009   & 0.1392±0.0008   & 0.2449±0.0007   \\ 
DiPmark($\alpha$=0.45)        & 0.3269±0.0008         & 0.3178±0.0009                & 0.3361±0.0010             & 5.024±0.018  & 0.3852±0.0009   & 0.1391±0.0008   & 0.2447±0.0008   \\
DiPmark($\alpha$=0.5)         & 0.3272±0.0008         & 0.3181±0.0009                & 0.3364±0.0010             & 5.014±0.018  & 0.3859±0.0009   & 0.1396±0.0008   & 0.2450±0.0008   \\ \midrule
Soft($\delta$=0.0)            & 0.3273±0.0008         & 0.3181±0.0009                & 0.3366±0.0010             & 5.021±0.018  & 0.3855±0.0009   & 0.1387±0.0008   & 0.2444±0.0008   \\
Soft($\delta$=1.0)            & 0.3237±0.0008         & 0.3137±0.0009                & 0.3338±0.0009             & 5.309±0.019  & 0.3816±0.0009   & 0.1348±0.0008   & 0.2411±0.0007   \\
Soft($\delta$=1.5)            & 0.3209±0.0008         & 0.3097±0.0009                & 0.3323±0.0010             & 5.660±0.021  & 0.3793±0.0009   & 0.1317±0.0007   & 0.2379±0.0007   \\
Soft($\delta$=2.0)            & 0.3146±0.0008         & 0.3027±0.0009                & 0.3266±0.0009             & 6.241±0.023  & 0.3725±0.0009   & 0.1252±0.0007   & 0.2321±0.0007 \\
\bottomrule
\end{tabular}
}
\end{table}

\begin{figure}
    \centering
     \includegraphics[width=1\textwidth]{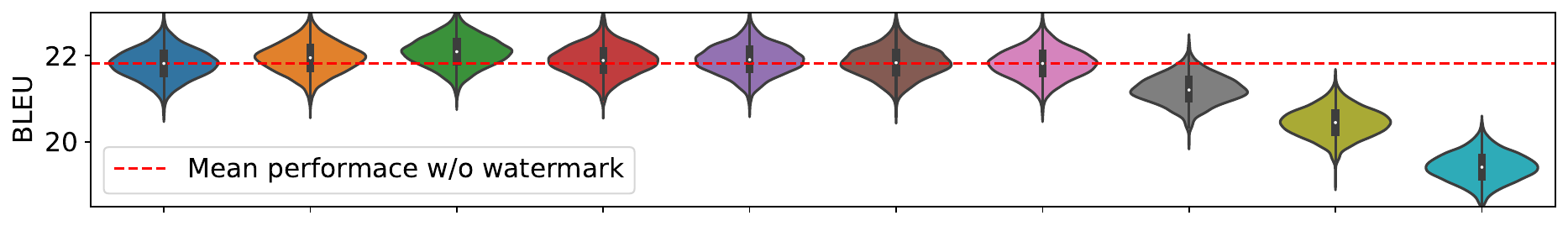}
    \includegraphics[width=1\textwidth]{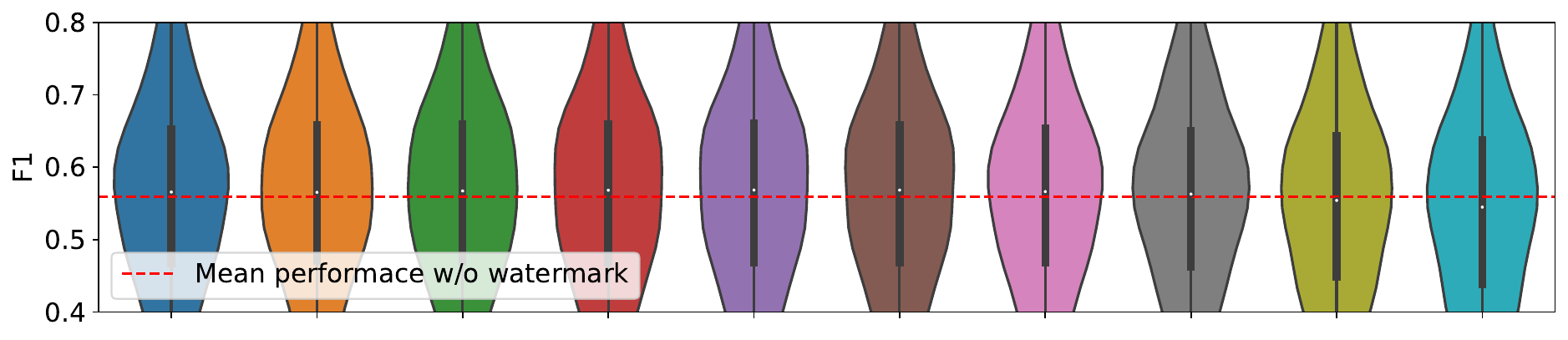}
    \includegraphics[width=1\textwidth]{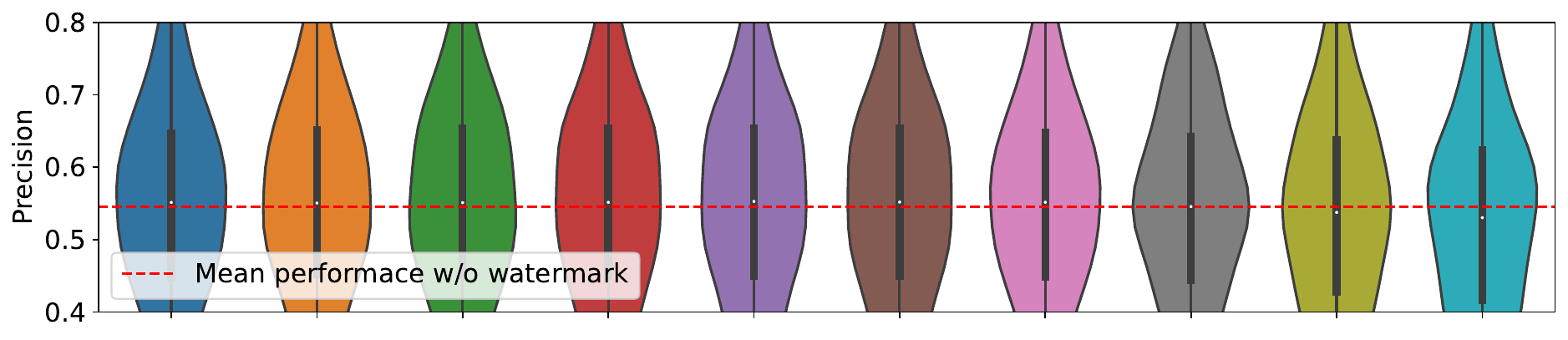}
    \includegraphics[width=1\textwidth]{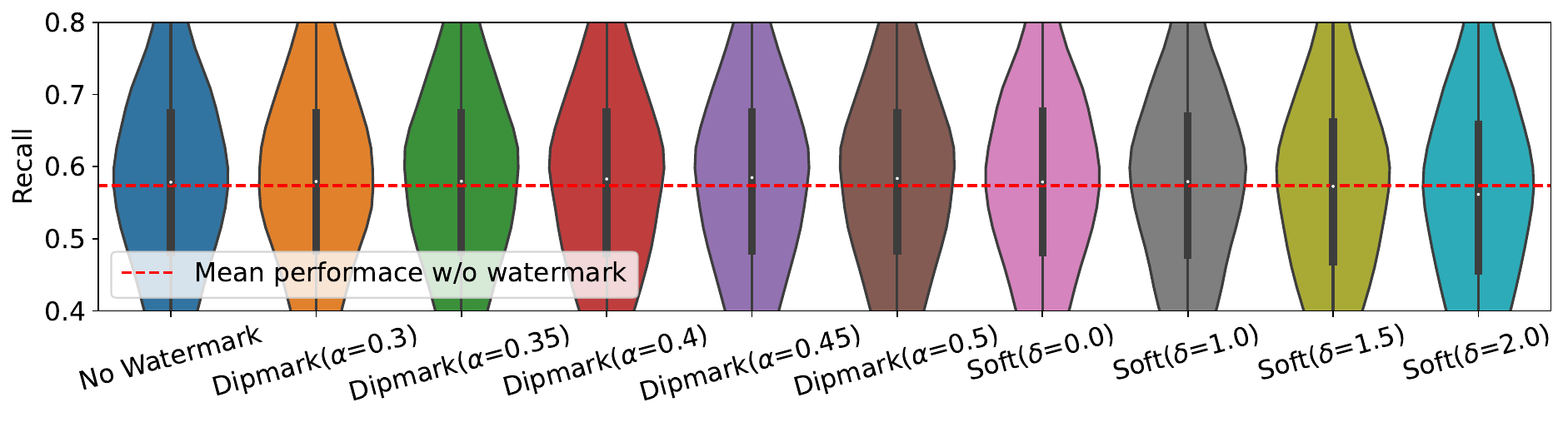}
    \caption{Violin plot of Machine Translation performance .}
    \label{fig:Perplexity MT add}
\end{figure}

\begin{figure}
    \centering
    \includegraphics[width=0.85\textwidth]{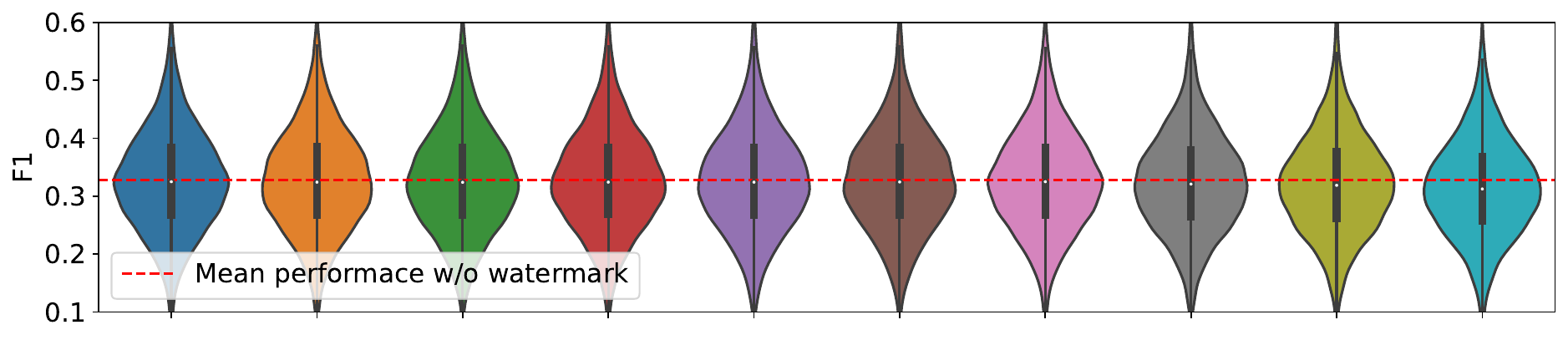}
    \includegraphics[width=0.85\textwidth]{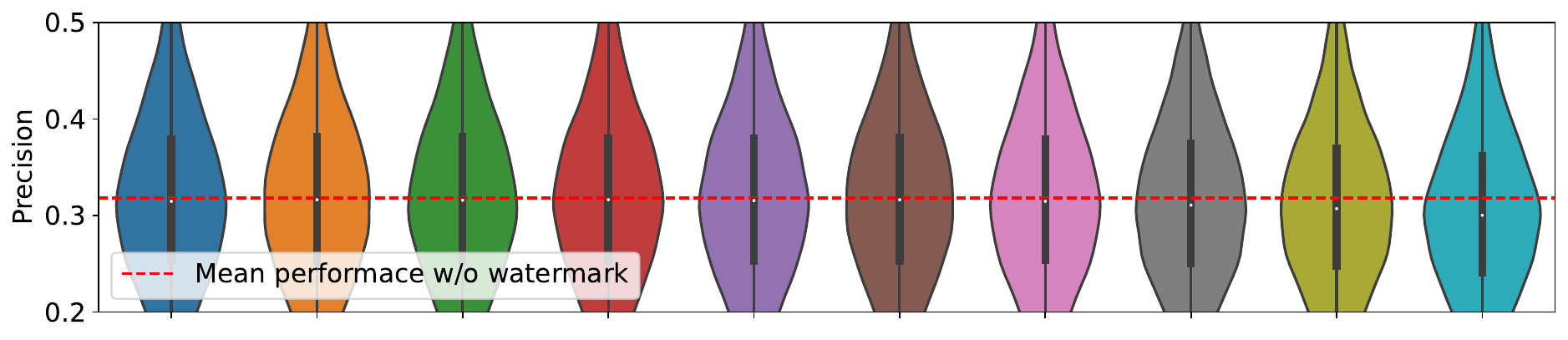}
    \includegraphics[width=0.85\textwidth]{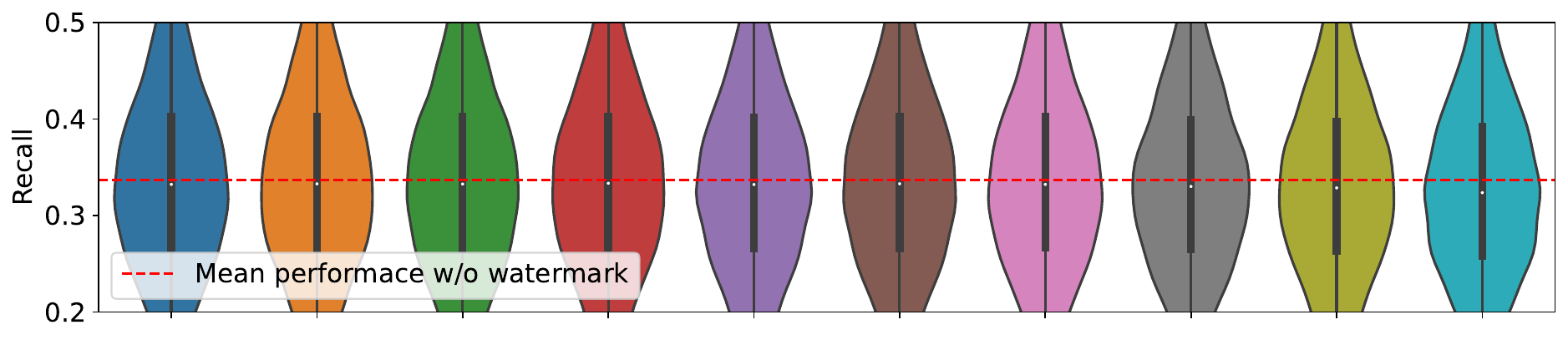}
    \includegraphics[width=0.85\textwidth]{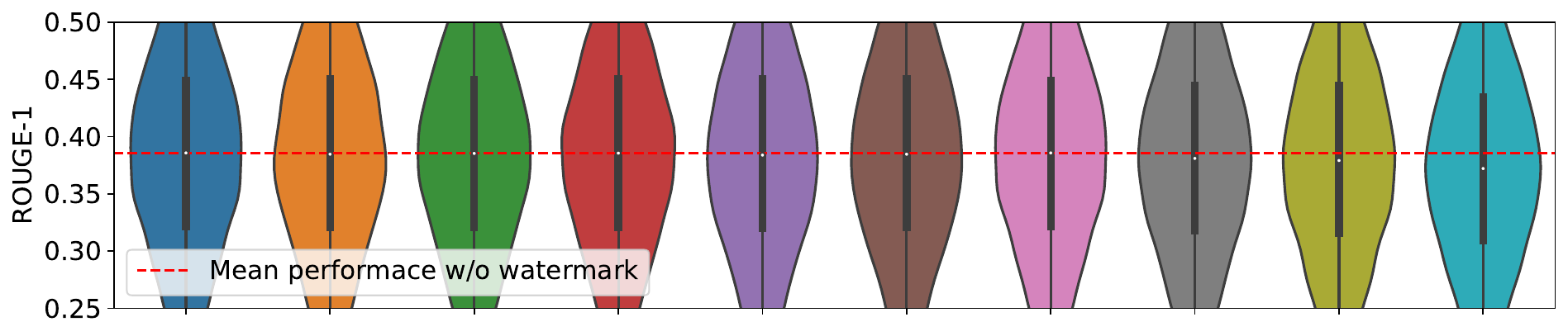}
    \includegraphics[width=0.85\textwidth]{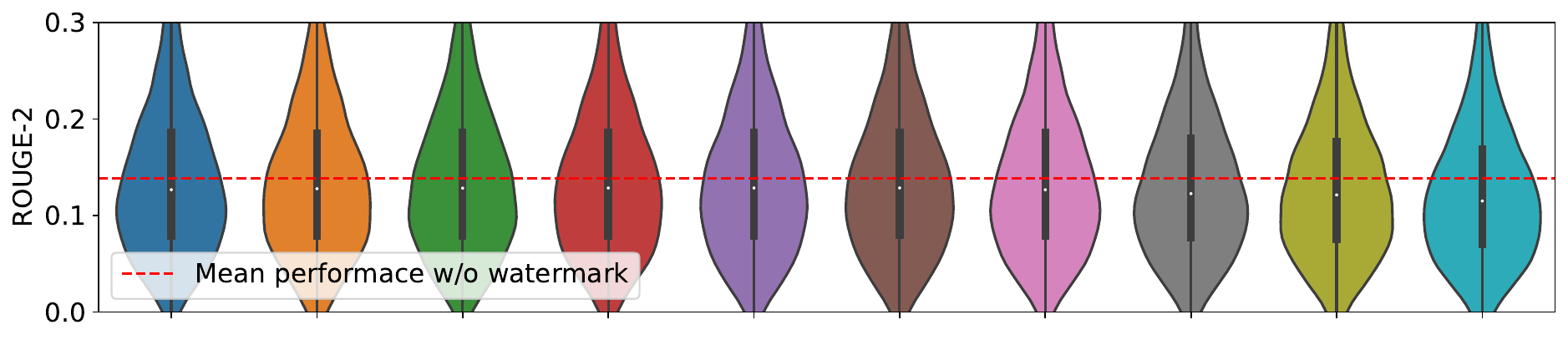}
    \includegraphics[width=0.85\textwidth]{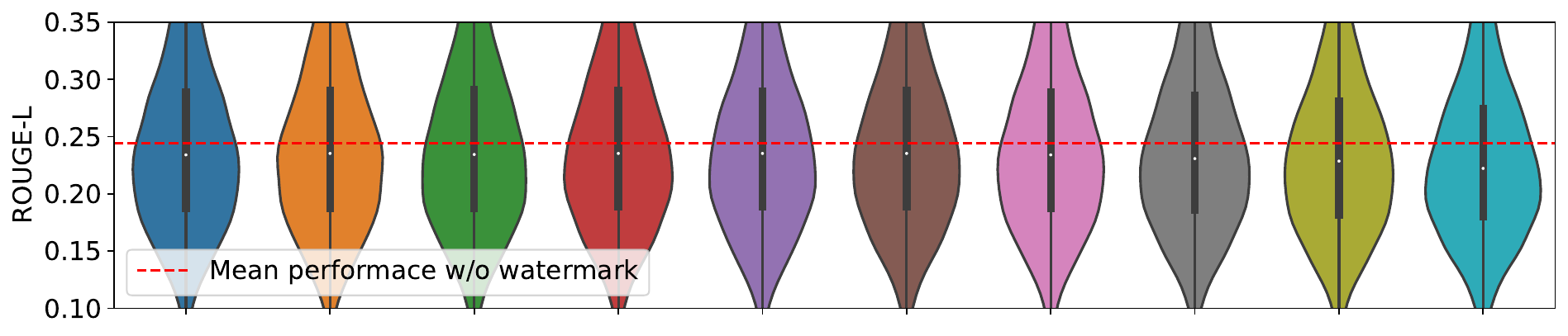}
    \includegraphics[width=0.85\textwidth]{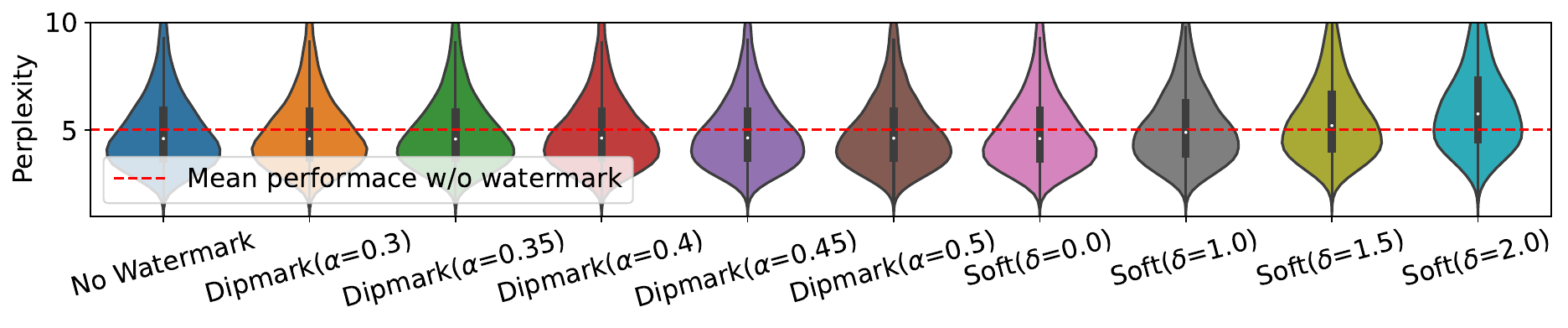}
    \caption{Violin plot of Text Summarization performance.}
    \label{fig:Perplexity TS add}
\end{figure}

\subsection{Distribution-preserving}\label{sec:add1}
\textbf{Settings.} 
In our evaluation, we assess the distribution-preserving performance of DiPmark within the context of two significant applications involving seq2seq models: machine translation (MT) and text summarization (TS). We follow the settings in \cite{hu2023unbiased}. For the TS task, our experimentation employs the BART-large model \citep{liu2020multilingual} in conjunction with the CNN-DM corpus \citep{hermann2015teaching} as our designated testing dataset. The MT task, on the other hand, revolves around English-to-Romanian translation. For this purpose, we employ the Multilingual BART (MBart) model \citep{liu2020multilingual} on the WMT’14 En-Ro corpus. Specifically for DiPmark, we select values for $\alpha$ from the set $\{0.3, 0.35, 0.4, 0.45, 0.5\}$, while for the Soft watermark \citep{kirchenbauer2023watermark}, we choose green list bias values $\delta$ from the set $\{0.0, 1.0, 1.5, 2.0\}$ alongside a fixed green list separator $\gamma=0.5$, indicating that 50\% of tokens are green while the remainder are red. It is important to note that the Soft watermark with $\delta=0.0$ is essentially equivalent to no watermark since it does not promote the probability of green list tokens.

A thorough examination of Figure~\ref{fig:Perplexity MT add}, Figure~\ref{fig:Perplexity TS add}, Table~\ref{tab:machineadd}, and Table~\ref{tab:summaradd} reveals a discernible trend. Throughout the range of $\alpha$ values spanning $\{0.3, 0.35, 0.4, 0.45, 0.5\}$, all the metrics associated with machine translation tasks and text summarization tasks maintain a consistent alignment between DiPmark and the original language model. Conversely, an upward adjustment in the $\delta$ values of the Soft watermark distinctly impacts the quality of the text output.
\begin{figure}
    \centering
    \includegraphics[width=0.495\textwidth]{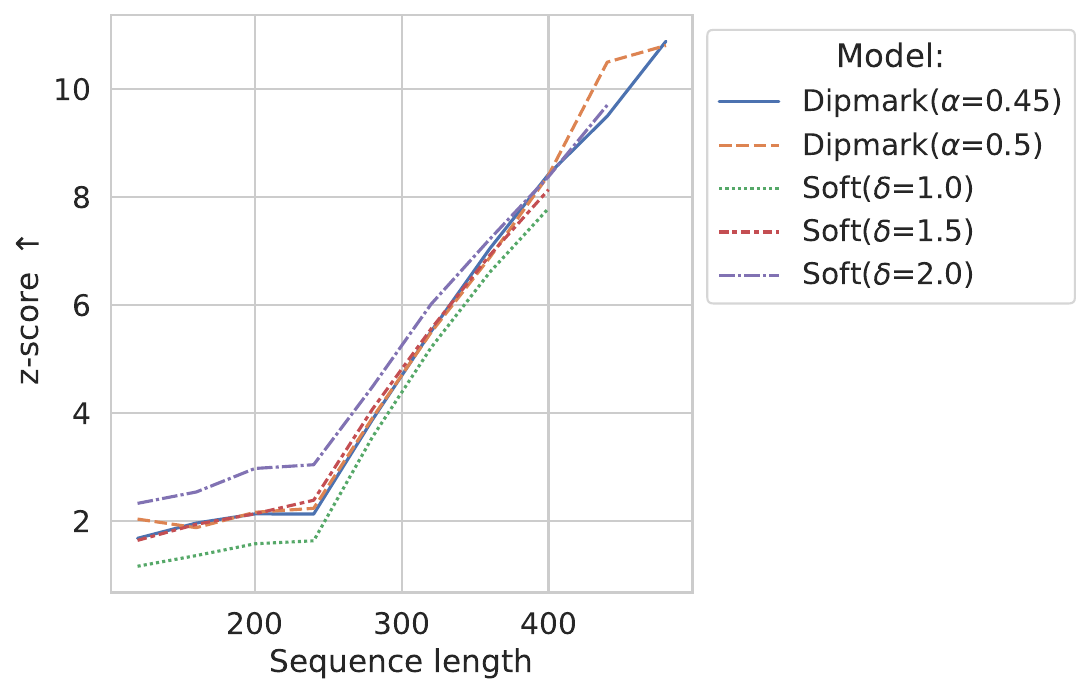}
    \includegraphics[width=0.495\textwidth]{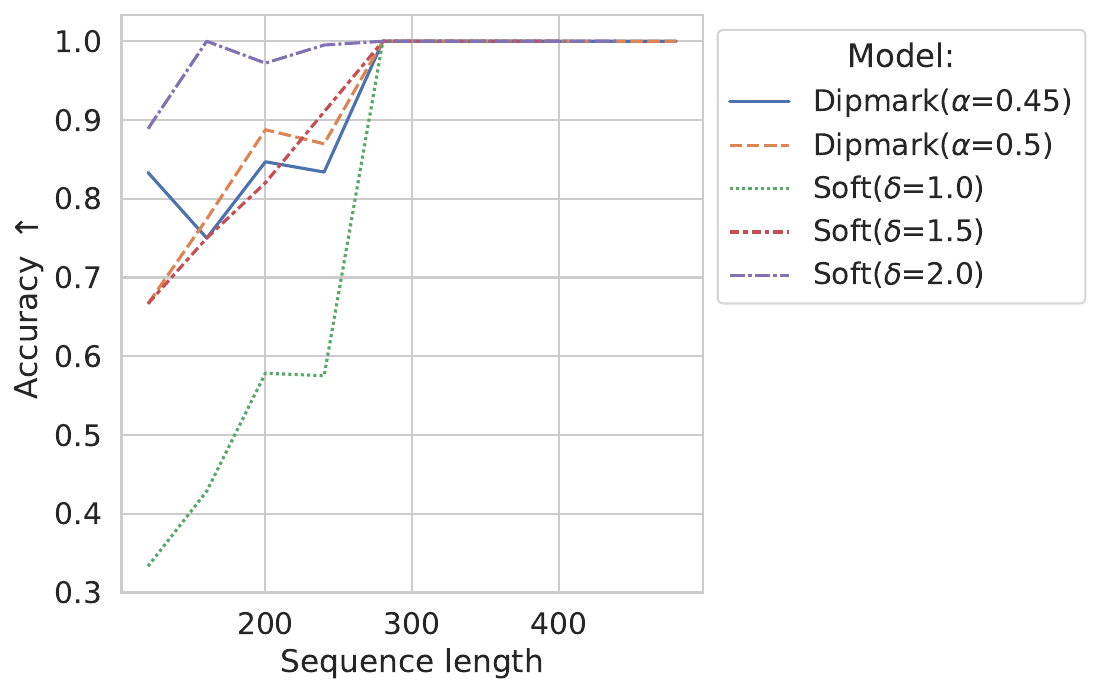}
    \caption{\textbf{Left:} Average z-score vs token sequence length with $\gamma=0.5$ on text generation tasks. \textbf{Right:} Watermark detection accuracy vs token sequence length with $\gamma=0.5$ and threshold $z = 1.517$ (false positive rate less than 0.01) on text generation tasks.}
    \label{fig:det2}
\end{figure}

\begin{figure}[t]
    \centering
    \includegraphics[width=0.495\textwidth]{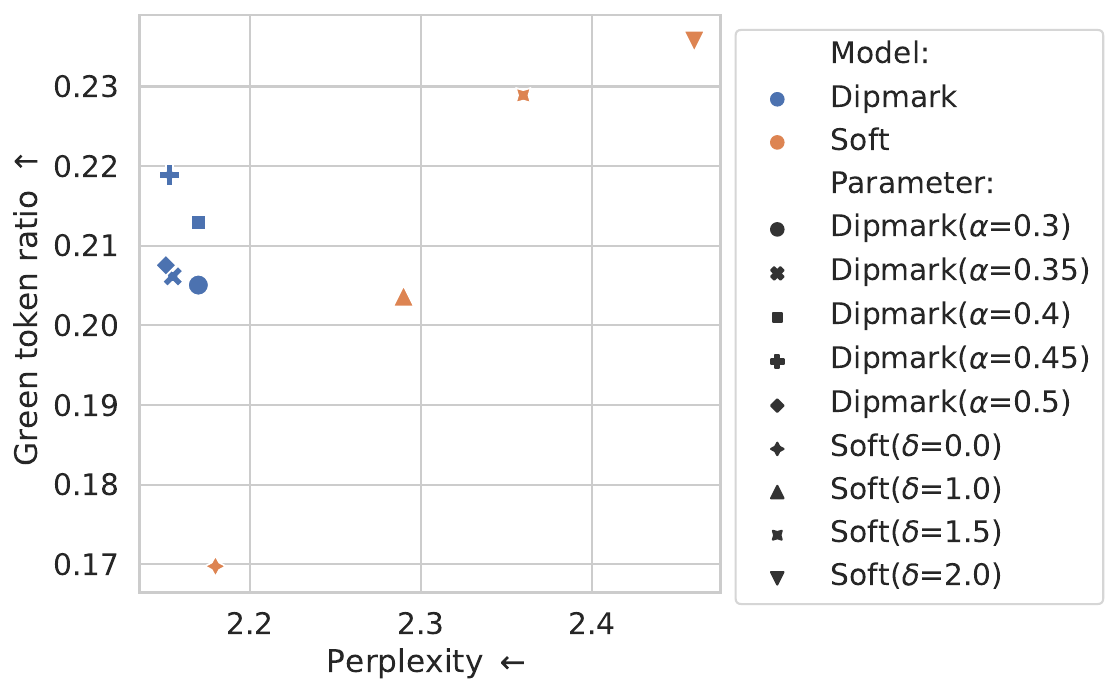}
    \includegraphics[width=0.495\textwidth]{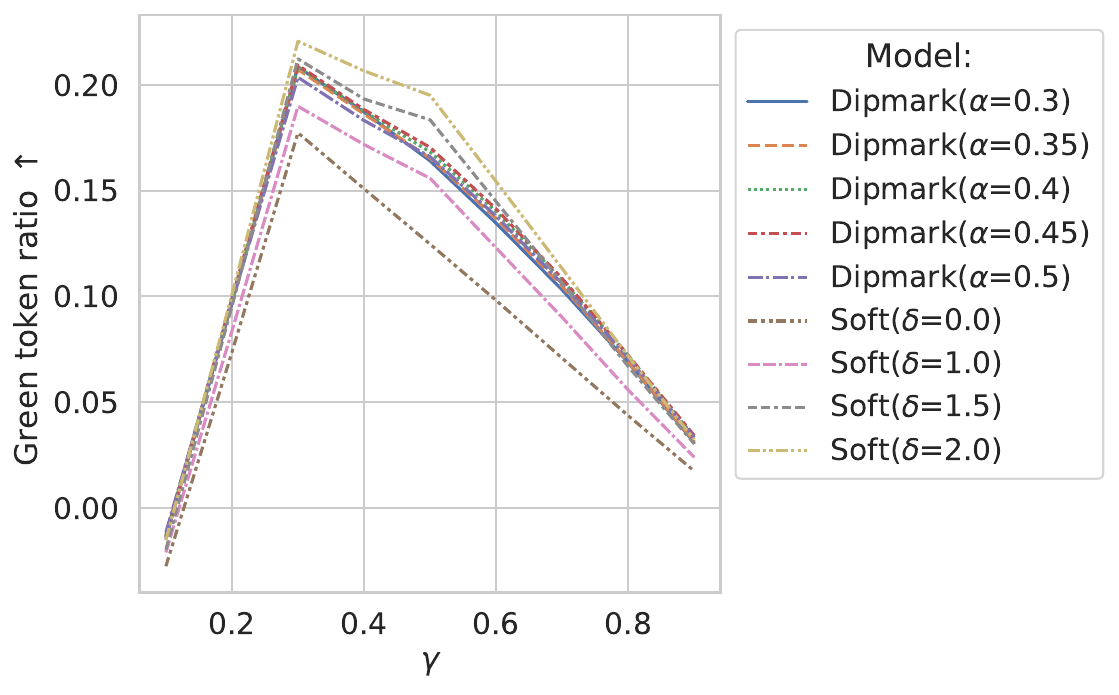}
    \caption{\textbf{Left.} Average Perplexity vs Green token rate with $\gamma=0.5$ on the text summarization task. \textbf{Right.} Avg. Green token ratio with different $\gamma$ on the text summarization task.}
    \label{fig:det1ad}
\end{figure}
\begin{figure}[t]
    \centering
    
    \includegraphics[width=0.495\textwidth]{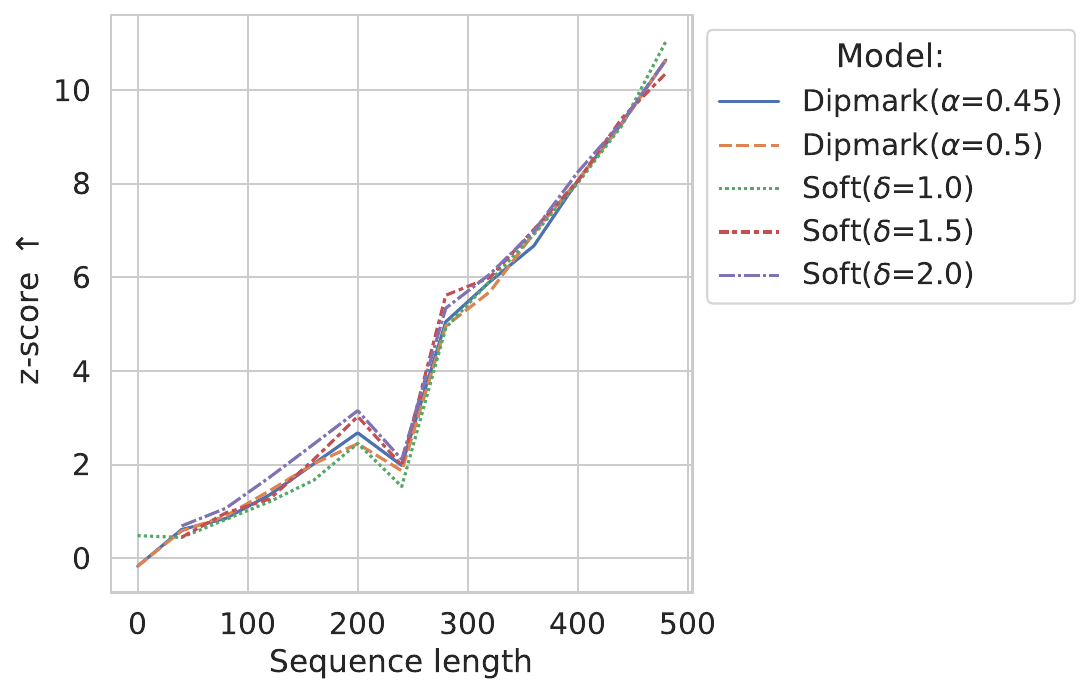}
    \includegraphics[width=0.495\textwidth]{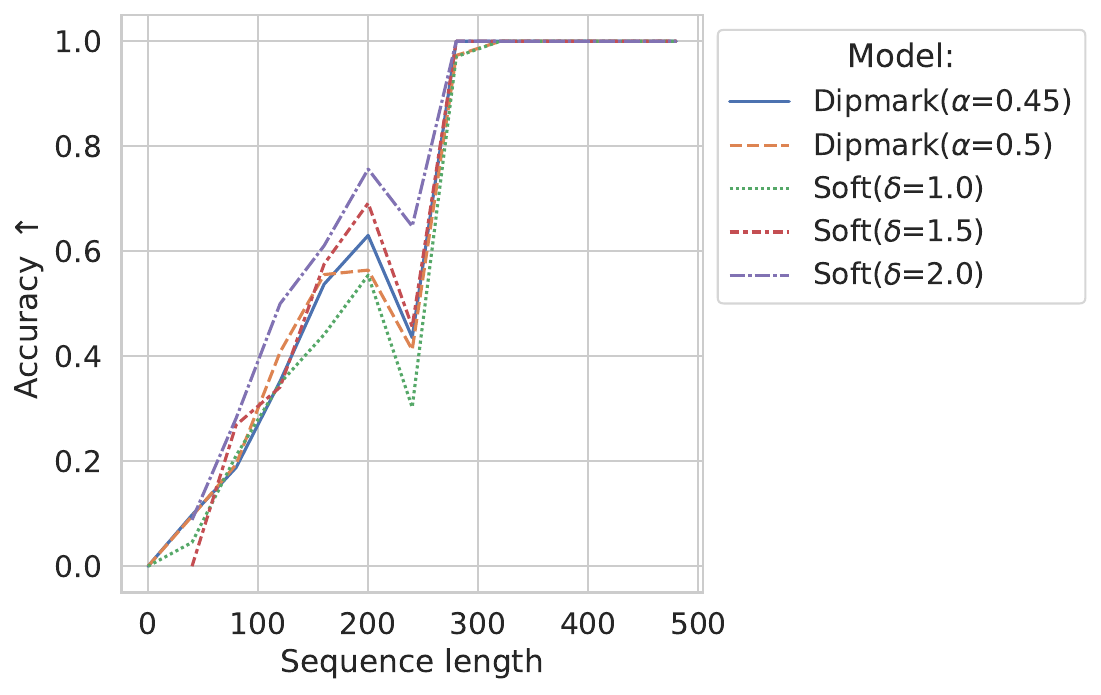}
    \caption{\textbf{Left.} Average z-score vs token sequence length with $\gamma=0.5$ on the text summarization task. \textbf{Right.}Avg. best p-score with text length with $\gamma=0.5$ on the text summarization task.  }
    \label{fig:det2ad}
\end{figure}
\subsection{Detectability comparison}\label{sec:add2}

\textbf{Settings.} 
We evaluate the detectability of our watermark on text summarization tasks using LLaMA-2. We generate 1,000 examples for each tasks.
 We also select $\alpha \in\{ 0.3, 0.35, 0.4, 0.45, 0.5\}$ for DiPmark, and $\delta \in \{0.0, 1.0, 1.5, 2.0\}$ and $\gamma=0.5$ for Soft watermark \citep{kirchenbauer2023watermark}. During detection, we also use $\gamma=0.5$. We report the green token ratio (defined in \ref{sec:detection}), the score of $\Phi(\gamma,\x)$ (z-score), and the detect accuracy.

\textbf{Result analysis.}  The results for text generation are visually depicted in Figure~\ref{fig:det1} and Figure~\ref{fig:det2}. Broadly speaking, our DiPmark variants with $\alpha = 0.45$ and $0.5$ exhibit performance comparable to that of the Soft watermark with $\delta = 1.5$, where $\delta=1.5$ corresponds to an augmentation of 1.5 to the green token logits. In Figure~\ref{fig:det1} (left), it is evident that our DiPmark variants with $\alpha=0.45$ and $0.5$ yield green token ratios akin to those of the Soft watermark with $\delta=1.5$ without any discernible degradation in text quality. Figure~\ref{fig:det1} (right) delves into the impact of different green list separators $\gamma$, revealing that, for most watermark models, $\gamma=0.5$ yields the highest green token ratio, underscoring its suitability as a reasonable choice for watermark detection. In Figure~\ref{fig:det2} (left) and Figure~\ref{fig:det2} (right), we present the average z-scores and accuracy metrics relative to sequence length. It is conspicuously observable that longer token sequences tend to facilitate easier detection, in line with our earlier analysis in Section~\ref{sec:detection}.
 The results for text summarization are visually depicted in Figure~\ref{fig:det1ad} and Figure~\ref{fig:det2ad}. Broadly speaking, our DiPmark variants with $\alpha = 0.45$ and $0.5$ exhibit performance comparable to that of the Soft watermark with $\delta = 1.5$, where $\delta=1.5$ corresponds to an augmentation of 1.5 to the green token logits. In Figure~\ref{fig:det1ad} (left), it is evident that our DiPmark variants with $\alpha=0.45$ and $0.5$ yield green token ratios akin to those of the Soft watermark with $\delta=1.5$ without any discernible degradation in text quality. Figure~\ref{fig:det1ad} (right) delves into the impact of different green list separators $\gamma$. Interestingly, for most watermark models, $\gamma=0.3$ yields the highest green token ratio instead of $\gamma=0.5$, which may be due to the low entropy characteristic of the text summarization task. In Figure~\ref{fig:det2ad} (left) and Figure~\ref{fig:det2ad} (right), we present the average z-scores and accuracy metrics relative to sequence length. It is conspicuously observable that longer token sequences tend to facilitate easier detection, in line with our earlier analysis in Section~\ref{sec:detection}. 
 
\subsection{Resilience}\label{sec:add3}

 We conduct experiments to test the resiliency of the our DiPmark and the Soft watermark in \cite{kirchenbauer2023watermark}. In this context, we use the text summarization tasks with 1,000 generated sequences on LLaMA-2. For resilience evaluation, we manipulating about $\epsilon\in\{0.05, 0.1, 0.2, 0.3\}$ portion of the text tokens through text insertion, text substitution, and text deletion.

 \textbf{Result Analysis.} 
 Figure~\ref{fig:addrobust} elucidates the evolution of the average green token ratio and the average z-score concerning the attack strength parameter $\epsilon$. Notably, both metrics exhibit a diminishing trend as $\epsilon$ increases.

\begin{figure}
    \centering
    \includegraphics[width=0.495\textwidth]{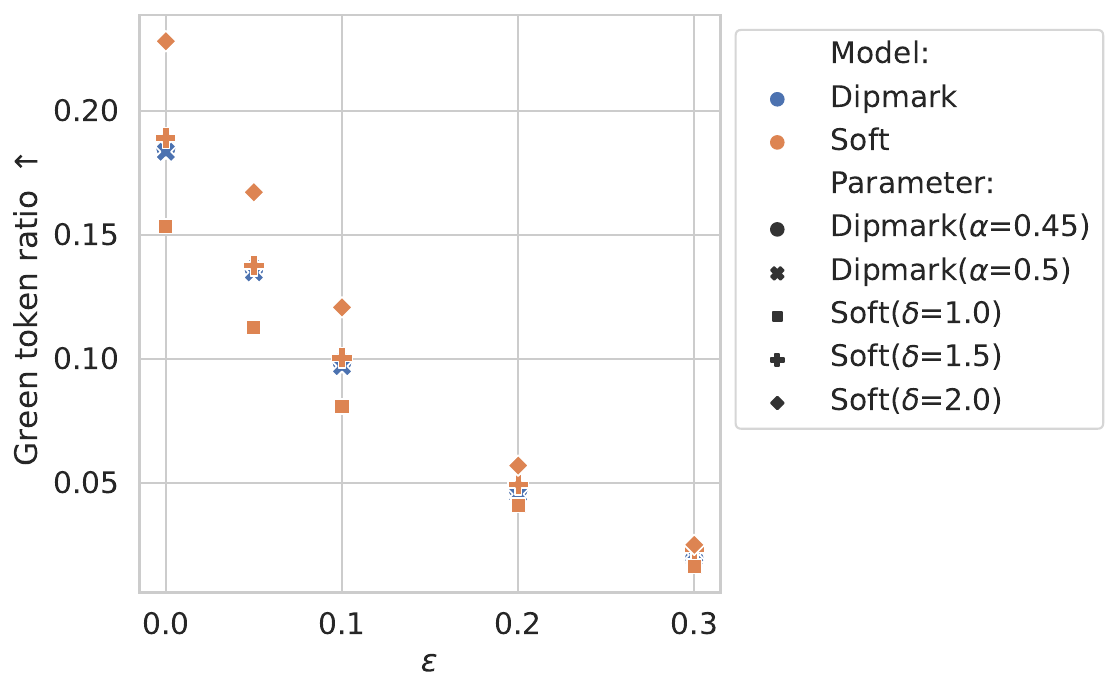}
    \includegraphics[width=0.495\textwidth]{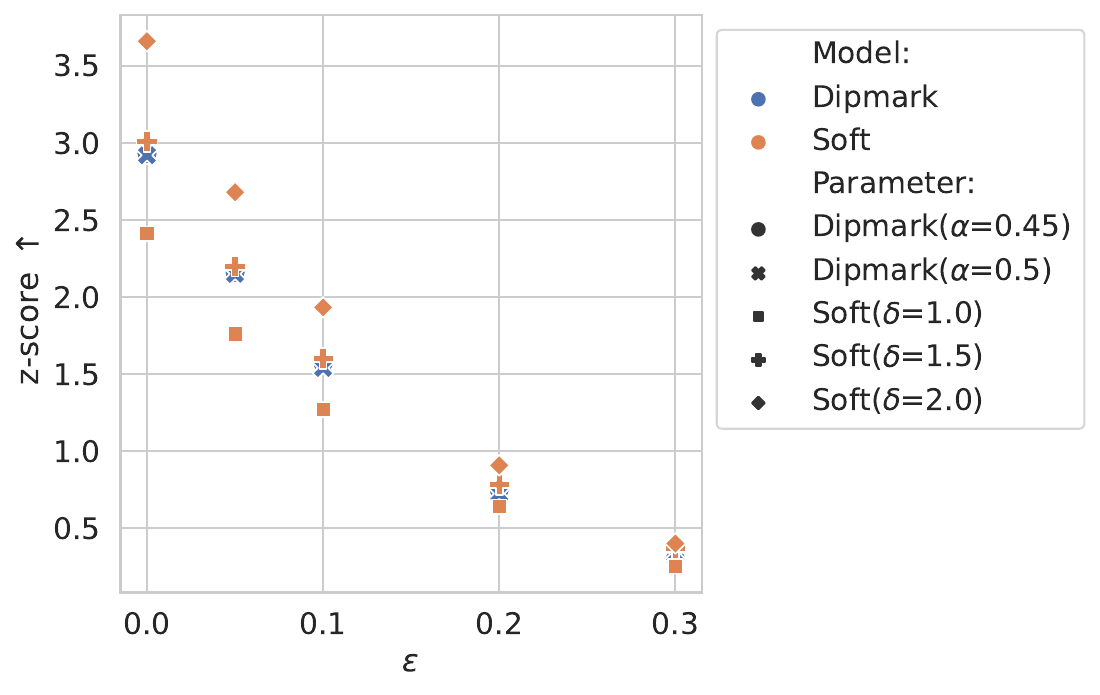}
    \vspace{-0.6cm}
    \caption{Robustness evaluation of DiPmark on text generation task. \textbf{Left.} Average green token ratio w.r.t. portion of perturbation $\epsilon$. \textbf{Right.} Average z-score w.r.t. portion of perturbation $\epsilon$.}
    \label{fig:robust1}
    \vspace{-0.2cm}
\end{figure}
 \begin{figure}
    \centering
    \includegraphics[width=0.495\textwidth]{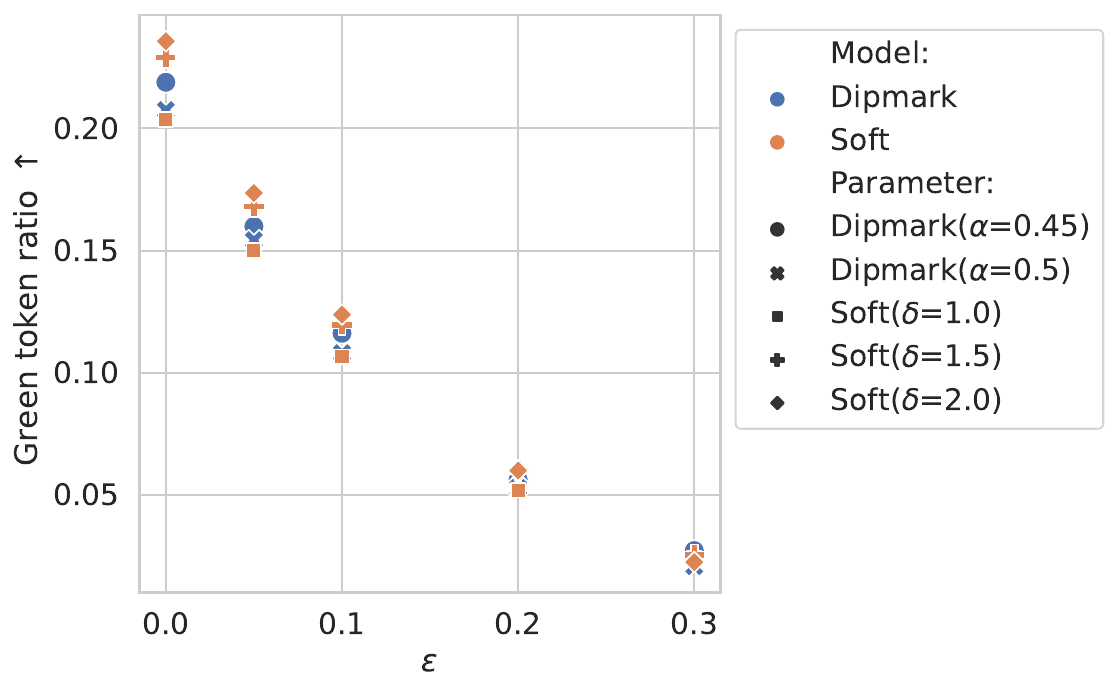}
    \includegraphics[width=0.495\textwidth]{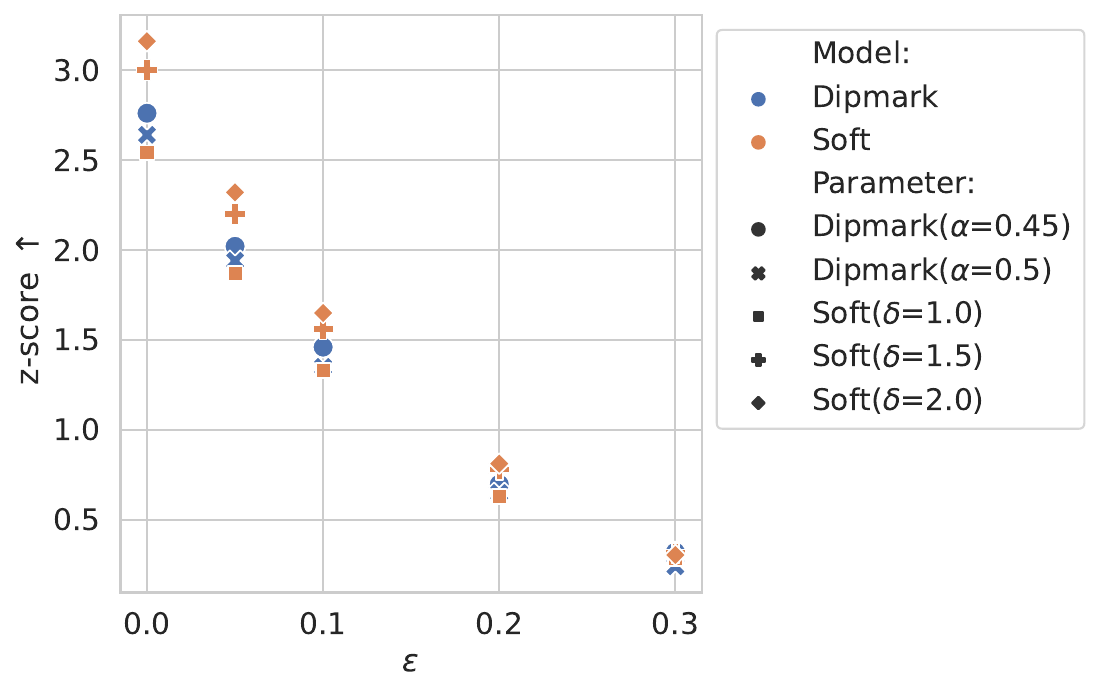}
    \caption{Robustness evaluation of DiPmark on text summarization task. \textbf{Left.} Average green token ratio w.r.t. portion of perturbation $\epsilon$. \textbf{Right.} Average z-score w.r.t. portion of perturbation $\epsilon$.}
    \label{fig:addrobust}
\end{figure}

\section{Broader Impacts}\label{sec:broader impacts}

Machine learning models exert substantial influence across various sectors, showcasing their capability to both improve efficiency and solve complex problems~\citep{yang2020prioritizing,yang2019lasso,wen2023feature,chakraborty2022using,cai2022asset,chen2024your,xu2017low,feng2018indexing}. Despite these benefits, concerns regarding the integrity and security of machine learning implementations persist~\citep{wu2023adversarial,wu2022retrievalguard,wu2023law,hong2024improving,pmlr-v202-hu23g,wang2023defending,wang2023distributionally}. In this setting, watermarking plays a crucial role by verifying the authenticity and ownership of digital media and aiding in the identification of AI-generated content.

\section{Examples of the watermarked text}
We list several examples of the watermarked text generated by LLaMA-2 on the text summarization task. We also report the p-value of the statistal testing using $\Phi(\gamma,\x_{1:n})$.
\begin{figure}[t]
    \centering
    \includegraphics[width=0.8\textwidth]{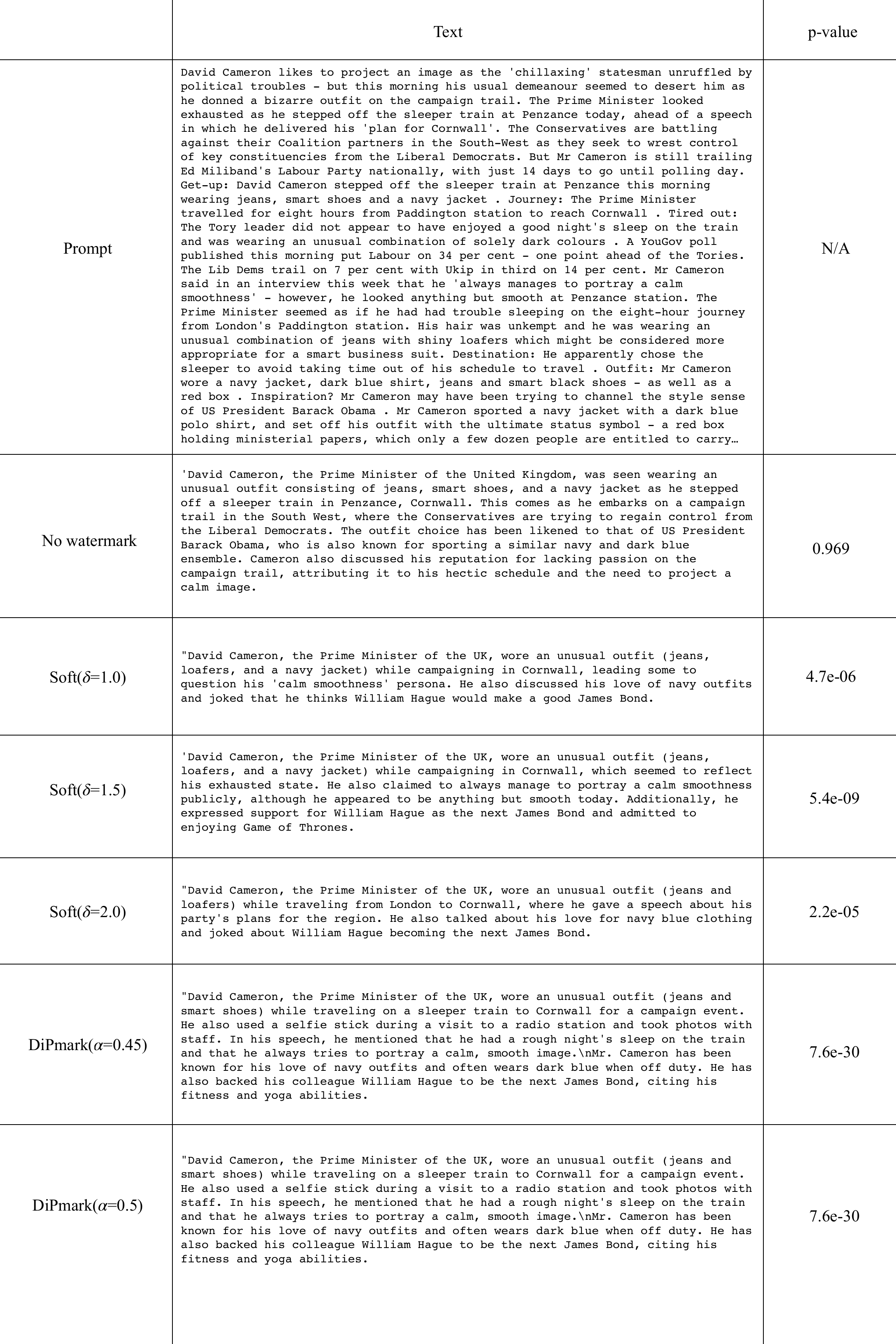}
    \caption{Examples of the watermarked text generated by LLaMA-2 on text summarization tasks.}
    \label{fig:text1}
\end{figure}

\begin{figure}[t]
    \centering
    \includegraphics[width=0.8\textwidth]{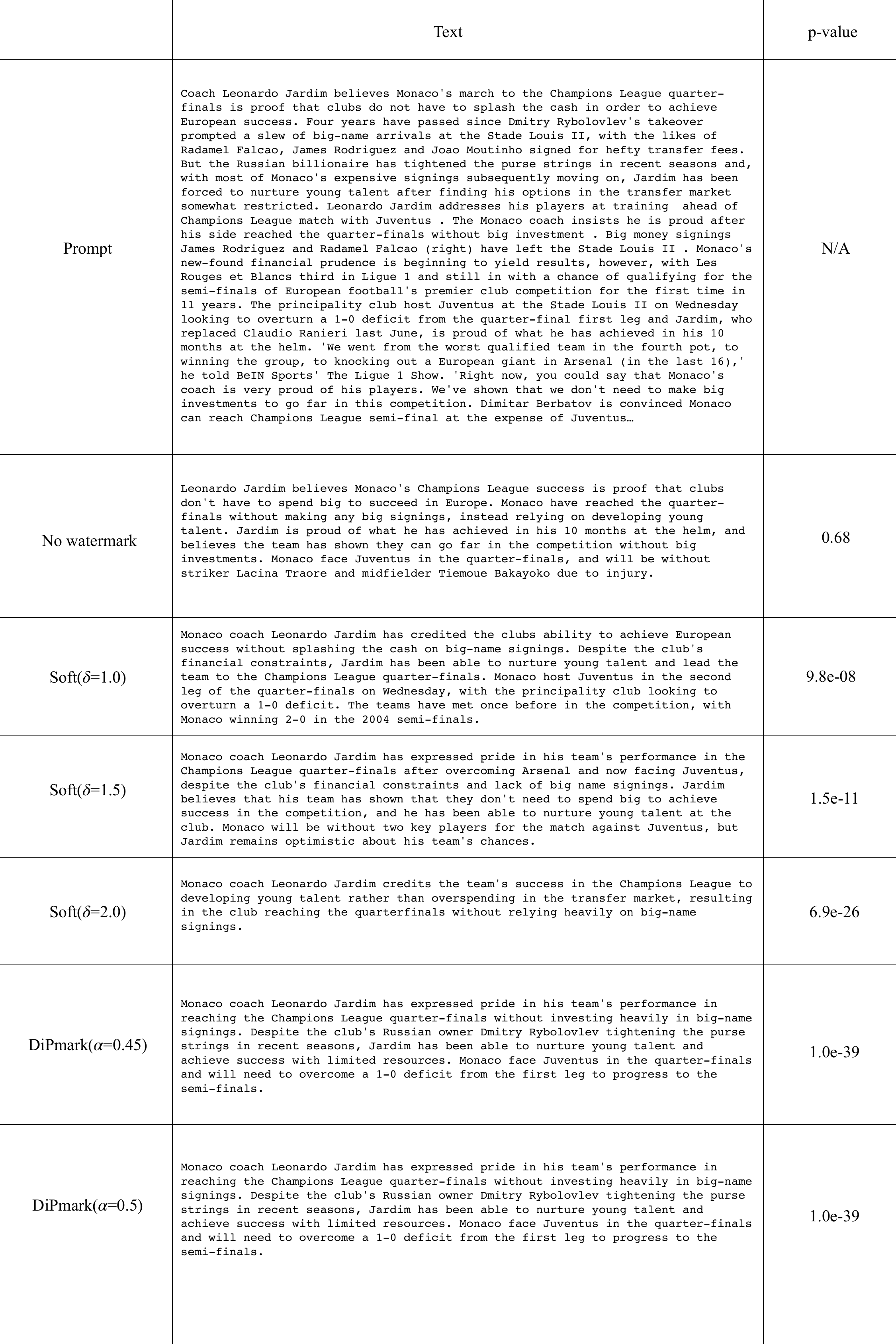}
    \caption{Examples of the watermarked text generated by LLaMA-2 on text summarization tasks.}
    \label{fig:text2}
\end{figure}

\begin{figure}[t]
    \centering
    \includegraphics[width=0.8\textwidth]{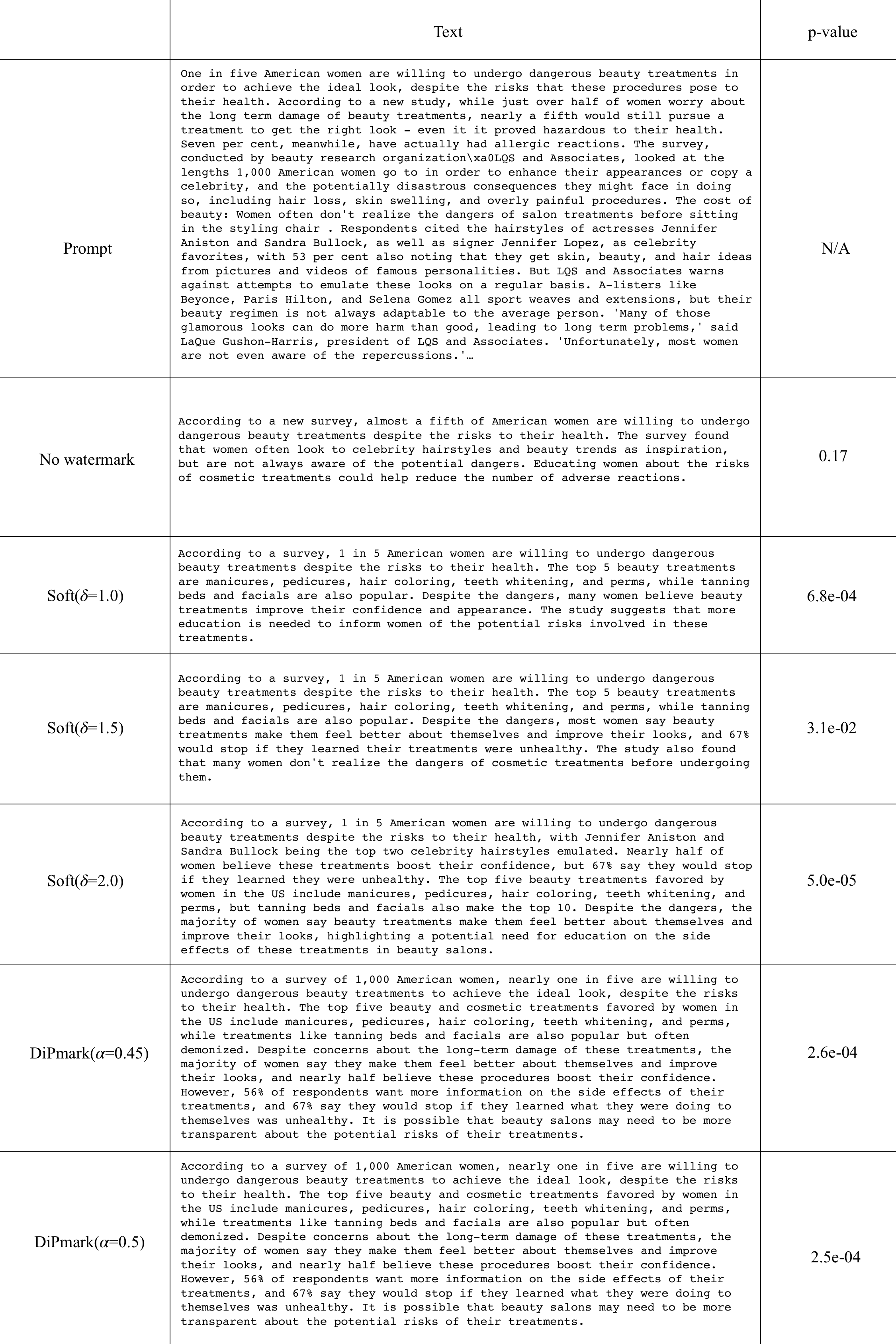}
    \caption{Examples of the watermarked text generated by LLaMA-2 on text summarization tasks.}
    \label{fig:text3}
\end{figure}

\begin{figure}[t]
    \centering
    \includegraphics[width=0.8\textwidth]{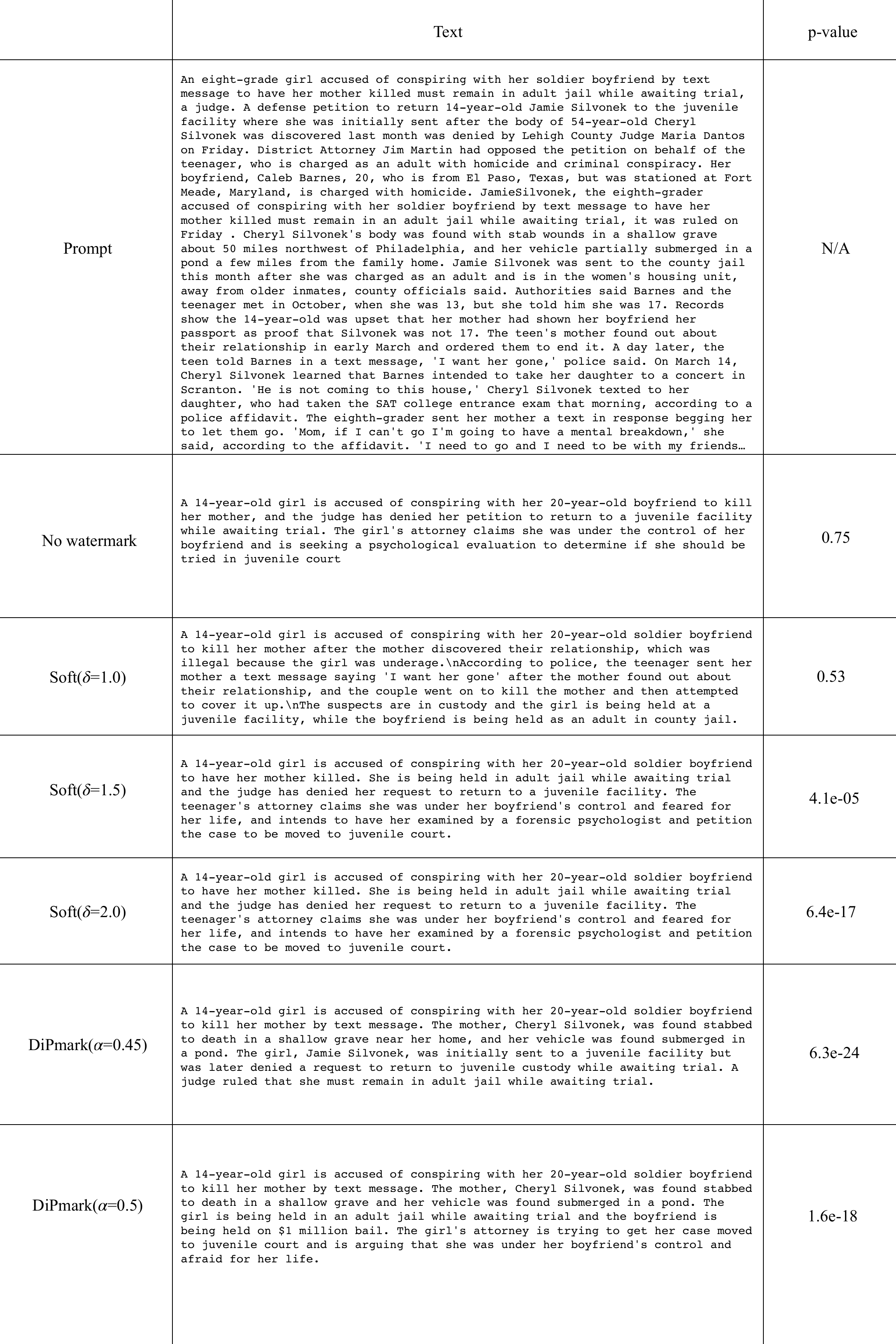}
    \caption{Examples of the watermarked text generated by LLaMA-2 on text summarization tasks.}
    \label{fig:text4}
\end{figure}

\begin{figure}[t]
    \centering
    \includegraphics[width=0.8\textwidth]{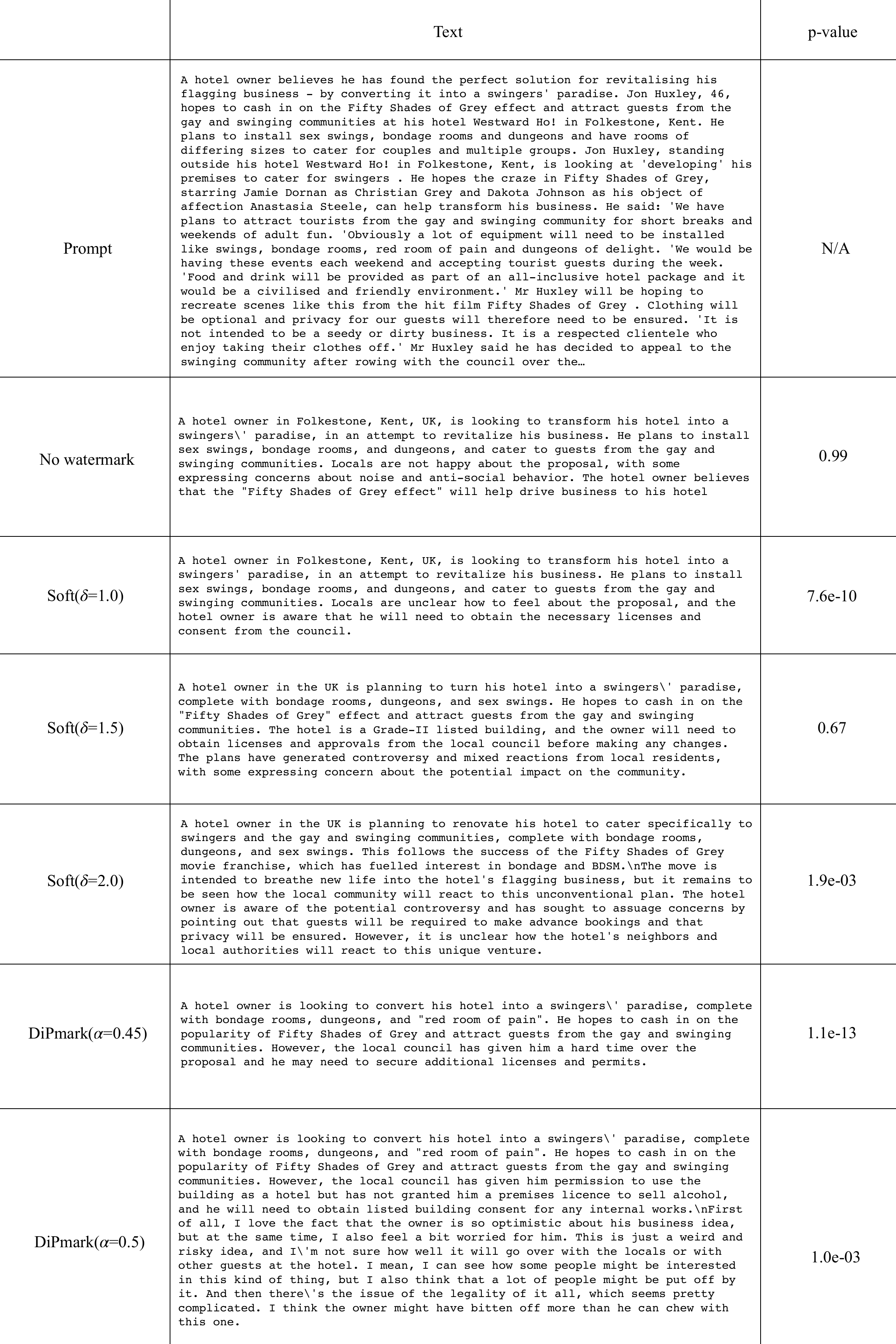}
    \caption{Examples of the watermarked text generated by LLaMA-2 on text summarization tasks.}
    \label{fig:text5}
\end{figure}

\end{document}